\newtheorem{prop}{Proposition}
\newtheorem{theorem}{Theorem}
\newtheorem{cor}{Corollary}
\newtheorem{lemma}{Lemma}
\theoremstyle{definition}
\newtheorem{Remark}{Remark}
\def\tr{\mbox{tr}}
\def\fP{\mathsf{P}}
\def\fQ{\mathsf{Q}}
\def\la{\langle}
\def\ra{\rangle}
\def\lba{[}
\def\rba{]}
\def\H{\mathcal{H}}
\def\Z{\mathbb Z}
\def\Zt{\Z[t,t^{-1}]}
\def\C{\mathbb C}
\def\Ct{\C[t,t^{-1}]}
\def\Ccrs{\C^{\times}}
\def\u1{\mathrm{U}(1)}
\def\l{\lbrace}
\def\lra{\longrightarrow}
\def\Sym{\rm Sym}
\def\Vv{V^{\vee}}
\def\gmod{\rm\bf -gmod}
\def\m{\mathbf m}
\def\n{\mathbf n}
\def\l{\mathbf l}
\def\id{\mathrm{id}}
\def\End{\mathrm{End}}
\def\adj{\mathrm{adj}}
\def\HHhom{\Hom_{'\H'}}
\def\lra{\longrightarrow}
\def\dmod{\rm\bf -mod}
\newcommand{\Hom}{{\rm Hom}}
\title{Categorification of the Heisenberg Algebra and MacMahon Function}
\date{}
\begin{document}

\author{Na Wang}
\author{Zhixi Wang}
\author{Ke Wu}
\author{Jie Yang}
\author{Zifeng Yang}

\address{Na Wang\\College of mathematics and information sciences
\\Henan University, Kaifeng, 475001, China\\}
\email{wangnawanda@163.com}

\address{Zhixi Wang\\School of mathematical science\\ Capital Normal University,
Beijing, 100048, China\\}
\email{wangzhx@mail.cnu.edu.cn}

\address{Ke Wu\\School of mathematical science\\Capital Normal University,
Beijing, 100048, China\\}
\email{wuke@mail.cnu.edu.cn}

\address{Jie Yang\\School of mathematical science\\Capital Normal University,
Beijing, 100048, China\\}
\email{yangjie@cnu.edu.cn}

\address{Zifeng Yang\\School of mathematical science\\Capital Normal University,
Beijing, 100048, China}
\email{yangzf@mail.cnu.edu.cn}

\begin{abstract}
Starting from a one dimensional vector space, we
construct a categorification $'\mathcal H$ of a deformed Heiserberg algebra $'H_{\Zt}$
by Cautis and Licata's method. The Grothendieck ring of $'\mathcal H$ is $'H_{\Zt}$.
As an application, we discuss some related partition functions related to
the MacMahon function of 3D Young diagram. We expect further applications of
the results of this paper.
\end{abstract}

\maketitle
\setcounter{page}{1}

\section{Introduction}

Let $H$ be the classical Heisenberg algebra of infinite rank, i.e., the algebra generated by the
sequence $\{a_n\}_{n\in{\Z}}$ with the defining relations
\begin{equation}\label{e:1.1}
a_na_m=a_ma_n+n{\delta}_{n+m,0}1, \quad n,m\in{\Z}.
\end{equation}
The generators $a_n$ for $n\in {\Z}$ can be realized from the fermions $\phi_n, \phi_n^* (n\in{\Z})$,
which satisfy the following relations
\begin{equation}\label{e:1.2}
[\phi_n,\phi_m]_+=0, \,\, [\phi_n^*,\phi_m^*]=0,\,\, [\phi_n^*,\phi_m]_+=\delta_{n+m,0},
\end{equation}
where we use the notation $[X,Y]_+=XY+YX$. Let ${\mathcal A}$ be the Clifford algebra generated by the
$\phi_n,\phi^*_n, n\in{\Z}$. Then an element of ${\mathcal A}$ can be written as a finite linear
combination of monomials of the form
\[
\phi_{m_1}\cdots\phi_{m_r}\phi^*_{n_1}\cdots\phi^*_{n_s}, \quad {\text{where }} m_1<\cdots<m_r, \,\,
n_1<\cdots <n_s.
\]
By introducing a variable $k$, we define the fermionic generating functions as the formal
sums
\[
\phi(k)=\sum\limits_{n\in{\Z}+\frac{1}{2}} \phi_n k^{-n-\frac{1}{2}}, \quad
\phi^*(k)=\sum\limits_{n\in{\Z}+\frac{1}{2}}\phi^*_nk^{-n-\frac{1}{2}}.
\]
Then the $a_n$ can be realized by the following equation
\begin{equation}\label{e:fb}
\sum\limits_{n\in{\Z}}a_n k^{-n-1}=:\phi(k)\phi^*(k):
\end{equation}
where $::$ is the normal order, see \cite{MJD} for detail.

The integral form $H_{\Z}$ of Heisenberg algebra $H$ using the vertex operators in terms of $a_n$
is generated over the ring $\Z$ by $\{p_n,q_n\}_{n\in{\mathbb N}}$ with relations
\begin{eqnarray}
q_n p_m & = & \sum_{k\geq 0}p_{m-k} q_{n-k}\label{qp}, \label{e:1.3}\\
 q_n q_m & = & q_m q_n, \label{e:1.4}\\
 p_n p_m & = & p_m p_n, \label{e:1.5}\label{pq}
\end{eqnarray}
in which $p_{0} = q_{0} = 1$ and $p_{k} = q_{k} = 0$ for $k <0$,
thus the summation in Equation (\ref{e:1.3}) is finite.
These generators $q_n, \ p_m$ are obtained as the homogeneous
components in $z$ of the ¡°halves of vertex operators¡±
\begin{equation}\label{e:pqrelation}
\sum\limits_{m\ge 0} q_m z^{-m}=\exp(\sum_{m\geq 1} \frac{1}{m}a_{m} z^{-m}), \qquad \text{and} \qquad
\sum\limits_{m\ge 0} p_m z^m = \exp(\sum_{m\geq 1} \frac{1}{m}a_{-m} z^m),
\end{equation}
which play a very important role in the conformal field theory and the theory of quantum
algebras.

The authors use slightly different presentations of integral forms of Heisenberg algebra in literature.
Khovanov used an integral form $H'_{\mathbb{Z}}$ of Heisenberg algebra in \cite{K}.
He introduced a calculus of planar diagrams for biadjoint functors and degenerate affine
Hecke algebras, which led to an additive monoidal category whose Grothendieck ring contains the
integral form $H'_{\mathbb{Z}}$ (and Khovanov conjectured that they are isomorphic).

We define the deformed integral form $'H_{\Zt}$ of Heisenberg algebra as the algebra
generated by $\{p_n, q_n\}_{n\in{\mathbb N}}$ over the ring ${\Z}[t,t^{-1}]$ with relations
\begin{eqnarray}
q_n p_m & = & \sum_{k\geq 0}[k+1]p_{m-k} q_{n-k}, \label{qp}\\
 q_n q_m & = & q_m q_n, \\
 p_n p_m & = & p_m p_n,\label{pq}
\end{eqnarray}
where $p_{0} = q_{0} = 1$, \, $p_{k} = q_{k} = 0$ for $k <0$,
and the quantum integer $[k]=\frac{t^{k}-1}{t-1}$. Under the relation (\ref{e:pqrelation})
between $\{p_n,q_n\}_{n\ge 0}$ and $\{a_n\}_{n\in \Z, n\ne 0}$ when the coefficients
are extended to ${\C}[t,t^{-1}]$, we can find that the deformed Heisenberg algebra $'H=\,'H_{\Ct}$
is generated by $\{a_n\}_{n\in\Z}$ with the defining relation
\begin{equation}\label{e:arelation}
[a_n, a_m]=n(1+t^n)\delta_{n+m, 0}\, 1.
\end{equation}

It is important to lift the Heisenberg algebra (any form) to some categoric version in representation theory.
Cautis and Licata \cite{CL} categorified the Heisenberg algebras ${\mathfrak h}_{\Gamma}$ for some finite subgroup
${\Gamma}$ of $SL_2(\C)$, but the discussion in that paper clearly can not be applied to the case when $\Gamma$ is the
trivial group. We will give a categorification of the Heisenberg algebra $'H_{\Zt}$ by the same techinique, and derive the MacMahon function from this categorification.  The results of this paper are
expected to be applied in the study of topics in the theory of quantum algebras which
we are undergoing.

The authors are grateful to Morningside Center of Chinese
Academy of Sciences for providing excellent research environment and financial
support. This work is also partially supported by NSF grant 11031005 and grant KZ201210028032.

\section{Categorification of the deformed Heisenberg algebra $'H_{\Zt}$}\label{section:2}
Let $V=\C v$ be a 1-dimensional vector space over the field $\C$ of complex numbers,
$\Lambda^*(V)$ the exterior algebra of $V$ with the basis $B$ consisting of  $1, \, v$. We define a $\C$-linear map $tr:\ \Lambda^*(V)\rightarrow \C$ by setting
\[
tr (v)=1,\ \ \ tr(1)=0.
\]
Let $B^\vee$ be the basis of $\Lambda^*(V)$ dual to $B$ with respect to the non-degenerate bilinear form $\langle a,b \rangle := \tr(ab)$. Under this bilinear form, we see that $1^\vee=v,\ v^\vee=1$.
$\Lambda^*(V)$ is graded as $\Lambda^*(V)={\C}\oplus {\C}v$, with the degree of
$b\in \Lambda^*(V)$ denoted by $|b|$, and $|\alpha|=0$ for $0\ne \alpha\in {\C}$, $|v|=1$.

We define a $2$-category $'\H'$ as follows.
\begin{itemize}
\item {\bf Objects (that is, $0$-cells) of $'\H'$.} The objects are the integers.
\item {\bf $1$-morphisms (that is, $1$-cells) of $'\H'$.}
Let $P:\ n\longmapsto n+1$ and $Q:\ n+1\longmapsto n$ be
two fundamental $1$-morphisms, $\mathbf 1$ the identity $1$-morphism on the objects,
and the $1$-morphisms $P\lba l\rba$, $Q\lba l\rba$, and ${\mathbf 1}\lba l\rba$ for
integers $l$ which we call the shifted $1$-morphisms by dimension $l$. The $1$-morphisms
of $'\H'$ are generated from $P\lba l\rba, Q\lba l\rba$, and $\mathbf 1\lba l\rba$ by finite
compositions and finite direct sums. The dimension shifting on the $1$-morphisms satisfy
the equality $A\lba k\rba\cdot B\lba l\rba=A\cdot B\lba k+l\rba$ for $1$-morphisms $A$ and $B$.
Note that $P\lba 0\rba, Q\lba 0\rba, {\mathbf 1}\lba 0\rba$ and $P, Q, {\mathbf 1}$ are
respectively the same.
\item {\bf $2$-morphisms (that is, $2$-cells) of $'\H'$.} The $2$-morphisms constitute a
$\C$-vector space. It is generated over $\C$
by planar oriented diagrams modulo relative
to boundary isotopies and some local relations which are described in detail in the following
paragraphs. The $2$-morphisms are also graded and the grading is compatible with the
grading of $1$-morphisms.
\end{itemize}
The detailed description is given in the following paragraphs.

An upward oriented strand denotes the identity 2-morphism $\id: P \rightarrow P$ while a downward oriented strand denotes the identity 2-morphism $\id: Q \rightarrow Q$. Upward-oriented lines and downward-oriented lines carrying dots  labeled by elements $b \in \Lambda^*(V)$ are other 2-morphisms of $'\H'$, for example,
\begin{equation*}
\begin{tikzpicture}[>=stealth,baseline=25pt]
\draw (0,0) -- (0,1)[->];
\filldraw (0,0.5) circle (2pt)+(0.25,0) node {$b$};
\draw [shift={+(1.95,0.5)}](0,0) node{$\in {\rm Hom}_{'\H'}(P,P\lba |b| \rba),$};
\draw [shift={+(4.5,0)}](0,0) -- (0,1)[<-];
\filldraw [shift={+(4.5,0.67)}](0,0) circle (2pt)+(0.25,0) node {$b'$};
\filldraw [shift={+(4.5,0)}](0,0.4) circle (2pt)+(0.25,0.4) node {$b''$};
\draw [shift={+(6.75,0.5)}](0,0) node{$\in {\rm Hom}_{'\H'}(Q,Q\lba |b'b''|\rba),$};
\draw [shift={+(9.2,0)}](0,0) -- (1,1)[<-];
\draw [shift={+(9.2,0)}](1,0) -- (0,1)[->];
\filldraw [shift={+(9.2,0)}](0.3,0.3) circle (2pt)+(0.45,0.4) node {$b$};
\draw [shift={+(12,0.5)}](0,0) node {$\in {\rm Hom}_{'\H'}(QP,PQ\lba |b|\rba).$};
\end{tikzpicture}
\end{equation*}
The dimension shifting appears in the above examples to be compatible with the grading of
the planar string diagrams, which is given later in this section. If no confusion occurs, we
also omit the dimension shifting notation besides a $1$-morphism.
On planar diagrams, compositions of 2-morphisms are depicted upward from bottom to top.
All these are similar to \cite{CL}, so are the local relations which the planar diagrams are
requested to satisfy. The only difference is that we adopt the
notation of dimension shifting $\lba \,\,\,\rba$
instead of the degree shifting $\la\,\,\,\ra$ in \cite{CL}.
for the $1$-morphisms. The dimension shifting is essential in discussion,
for reasons which can be seen better in Section
\ref{section:3} about a natural representation of the Karoubi envelope $'\H$ of $'\H'$.
We also understand that in a planar string diagram, the $1$-morphisms in the bottom are generated
by $P$, $Q$, and ${\mathbf 1}$ through finite compositions (that is, the $1$-morphisms without dimension
shifting), unless it is explicitly indicated otherwise.

The local relations are the following. First, the dots can move freely along strands and through intersections, for
example,
\begin{equation*}
\begin{split}
&\begin{tikzpicture}[>=stealth, baseline=25pt]
\draw [shift={+(0,0)}](0,0) -- (1,1)[<-];
\draw [shift={+(0,0)}](1,0) -- (0,1)[->];
\filldraw [shift={+(0,0)}](0.3,0.3) circle (2pt)+(0.45,0.4) node {$b$};
\draw [shift={+(1.85,0.5)}] node{ $=$};
\draw [shift={+(2.7,0)}](0,0) -- (1,1)[<-];
\draw [shift={+(2.7,0)}](1,0) -- (0,1)[->];
\filldraw [shift={+(2.7,0)}](0.8,0.8) circle (2pt)+(0.95,0.75) node {$b$};
\draw [shift={+(3.7,0)}] (0.3,0) node{,};
\draw [shift={+(6,0)}](0,0) -- (1,1)[<-];
\draw [shift={+(6,0)}](1,0) -- (0,1)[->];
\filldraw [shift={+(6,0)}](0.2,0.8) circle (2pt)+(0.35,0.9) node {$b$};
\draw [shift={+(7.85,0.5)}] node{ $=$};
\draw [shift={+(8.7,0)}](0,0) -- (1,1)[<-];
\draw [shift={+(8.7,0)}](1,0) -- (0,1)[->];
\filldraw [shift={+(8.7,0)}](0.8,0.2) circle (2pt)+(0.95,0.3) node {$b$};
\draw [shift={+(9.7,0)}] (0.3,0) node{,};
\end{tikzpicture}\\
&\begin{tikzpicture}[>=stealth]
\draw (0,0.75) arc (180:360:.5);
\draw (0,1.2) -- (0,0.75) ;
\draw (1,1.2) -- (1,0.75) [<-];
\filldraw  (0,0.9) circle (2pt)+(.25,0) node {$b$};
\draw [shift={+(2,0.6)}](0,0) node{$=$};
\draw [shift={+(3,0)}](0,0.75) arc (180:360:.5);
\draw [shift={+(3,0)}](0,1.2) -- (0,0.75) ;
\draw [shift={+(3,0)}](1,1.2) -- (1,0.75) [<-];
\filldraw [shift={+(3,0)}] (0.5,0.25) circle (2pt)+(.75,0.25) node {$b$};
\draw [shift={+(5,0.6)}](0,0) node{$=$};
\draw [shift={+(6,0)}](0,0.75) arc (180:360:.5);
\draw [shift={+(6,0)}](0,1.2) -- (0,0.75) ;
\draw [shift={+(6,0)}](1,1.2) -- (1,0.75) [<-];
\filldraw [shift={+(6,0)}] (1,0.9) circle (2pt)+(1.25,0.9) node {$b$};
\draw [shift={(6,0)}] (1.5, 0.4) node{.};
\end{tikzpicture}
\end{split}
\end{equation*}
%
%
%
%
Second, collision of dots is controlled by the multiplication in the exterior algebra ${\Lambda}^*(V)$:
\begin{equation*}
\begin{tikzpicture}[>=stealth]
\draw (0,0) -- (0,1)[->];
\filldraw (0,0.5) circle (2pt)+(-0.35,0) node {$b_1b_2$};
\draw [shift={+(0.8,0.5)}](0,0) node{$=$};
\draw [shift={+(1.6,0)}](0,0) -- (0,1)[->];
\filldraw [shift={+(1.6,0)}] (0,0.7) circle (2pt)+(0.25,0.7) node {$b_1$};
\filldraw [shift={+(1.6,0)}] (0,0.3) circle (2pt)+(0.25,0.3) node {$b_2$};
\draw [shift={+(2.6,0)}](0,0) node{$,$};

\draw [shift={+(5.6,0)}] (0,0) -- (0,1)[<-];
\filldraw [shift={+(5.6,0)}] (0,0.5) circle (2pt)+(-0.35,0.5) node {$b_1b_2$};
\draw [shift={+(6.4,0.5)}](0,0) node{$=$};
\draw [shift={+(7.2,0)}](0,0) -- (0,1)[<-];
\filldraw [shift={+(7.2,0)}] (0,0.7) circle (2pt)+(0.25,0.7) node {$b_1$};
\filldraw [shift={+(7.2,0)}] (0,0.3) circle (2pt)+(0.25,0.3) node {$b_2$};
\draw [shift={+(7.2,0)}](1,0) node{$.$};
\end{tikzpicture}
\end{equation*}
Third, dots on strands supercommute when they move past one another, for example:
\begin{equation*}
\begin{tikzpicture}[>=stealth]
\draw (0,0) -- (0,1)[->];
\filldraw (0,0.3) circle (2pt)+(-0.25,0) node {$b_1$};
\draw [shift={+(0.6,0.5)}](0,0) node{$\cdots$};
\draw [shift={+(1.2,0)}](0,0) -- (0,1)[->];
\filldraw [shift={+(1.2,0)}] (0,0.7) circle (2pt)+(0.25,0.7) node {$b_2$};
\draw [shift={+(2.2,0)}](0,0.5) node{$=$};
\draw [shift={+(3.7,0)}] (0,0.5) node {$(-1)^{|b_1||b_2|}$};
\draw [shift={+(5.2,0)}] (0,0) -- (0,1)[->];
\filldraw [shift={+(5.2,0)}] (0,0.7) circle (2pt)+(-0.25,0.7) node {$b_1$};
\draw [shift={+(5.8,0.5)}](0,0) node{$\cdots$};
\draw [shift={+(6.4,0)}](0,0) -- (0,1)[->];
\filldraw [shift={+(6.4,0)}] (0,0.3) circle (2pt)+(0.25,0.3) node {$b_2$.};
\end{tikzpicture}
\end{equation*}
Last, local relations contain the following:
\begin{equation}\label{local1}
\begin{tikzpicture}[>=stealth,baseline=0.8cm]
  \draw (0,0) .. controls (0.8,0.8) .. (0,1.6)[->];
  \draw (0.8,0) .. controls (0,0.8) .. (0.8,1.6)[->] ;
  \draw (1.2,0.8) node {=};
  \draw (2,0) --(2,1.6)[->];
  \draw (2.8,0) node{\qquad ,} -- (2.8,1.6)[->];
  \draw [shift={+(4.8,0)}](0,0) -- (1.6,1.6)[->];
  \draw [shift={+(4.8,0)}](1.6,0) -- (0,1.6)[->];
  \draw [shift={+(4.8,0)}](0.8,0) .. controls (0,0.8) .. (0.8,1.6)[->];
  \draw [shift={+(4.8,0)}](2,0.8) node {=};
  \draw [shift={+(4.8,0)}](2.4,0) -- (4,1.6)[->];
  \draw [shift={+(4.8,0)}](4,0) node{\qquad ;} -- (2.4,1.6)[->];
  \draw [shift={+(4.8,0)}](3.2,0) .. controls (4,0.8) .. (3.2,1.6)[->];
\end{tikzpicture}
\end{equation}

\begin{equation}
\begin{tikzpicture}[>=stealth,baseline=0.8cm]
\draw (0,0) .. controls (0.8,0.8) .. (0,1.6)[<-];
\draw (0.8,0) .. controls (0,0.8) .. (0.8,1.6)[->];
\draw (1.2,0.8) node {=};
\draw (1.6,0) --(1.6,1.6)[<-];
\draw (2.4,0) -- (2.4,1.6)[->];
\draw (2.8,0.8) node {$-$};
\draw (3.2,1.4) arc (180:360:.4);
\draw (3.2,1.6) -- (3.2,1.4) ;
\draw (4,1.6) -- (4,1.4) [<-];
\draw (4,.2) arc (0:180:.4) ;
\filldraw  (3.6,1) circle (2pt)+(.2,0) node {$v$};
\draw (4,0) -- (4,.2) ;
\draw (3.2,0) -- (3.2,.2) [<-];
\draw (4.4,0.8) node {$-$};
\filldraw  (5.2,0.6) circle (2pt)+(.2,0) node {$v$};
\draw (4.8,1.4) arc (180:360:.4);
\draw (4.8,1.6) -- (4.8,1.4) ;
\draw (5.6,1.6) -- (5.6,1.4) [<-];
\draw (5.6,.2) arc (0:180:.4);
\draw (5.6,0) -- (5.6,.2) node{\qquad ;} ;
\draw (4.8,0) -- (4.8,.2) [<-];
\end{tikzpicture}
\end{equation}

\begin{equation}\label{local2}
\begin{tikzpicture}[>=stealth,baseline=0.8cm]
\draw (0,0) .. controls (0.8,0.8) .. (0,1.6)[->];
\draw (0.8,0) .. controls (0,0.8) .. (0.8,1.6)[<-] ;
\draw (1.2,0.8) node {=};
\draw (1.84,0) --(1.84,1.6)[->];
\draw (2.64,0) node{\qquad ,} -- (2.64,1.6)[<-];

\draw [shift={+(4.8,0.8)}](-0.8,0) .. controls (-0.8,.4) and (-.24,.4) .. (-.08,0) ;
\draw [shift={+(4.8,0.8)}](-0.8,0) .. controls (-0.8,-.4) and (-.24,-.4) .. (-.08,0) ;
\draw [shift={+(4.8,0.8)}](0,-0.8) .. controls (0,-.4) .. (-.08,0) ;
\draw [shift={+(4.8,0.8)}](-.08,0) .. controls (0,.4) .. (0,0.8) [->] ;
\draw [shift={+(4.8,0.8)}](0.56,0) node {$=0$\quad ;};
\end{tikzpicture}
\end{equation}

\begin{equation}
\begin{tikzpicture}[>=stealth,baseline=0cm]
\draw [shift={+(0,0)}](0,0) arc (180:360:0.5cm) ;
\draw [shift={+(0,0)}][->](1,0) arc (0:180:0.5cm) ;
\filldraw [shift={+(1,0)}](0,0) circle (2pt);
\draw [shift={+(0,0)}](1,0) node [anchor=east] {$b$};
\draw [shift={+(0,0)}](1.75,0) node{$\qquad = \tr(b)$\,\,.};
\end{tikzpicture}
\end{equation}
We notice that the above local relations imply that the second relation of (\ref{local1}) is
satisfied when any of the arrows of the strings in the diagram on the
left hand side of the equality is reversed (and the arrows of the strings of the right
hand side of the equality are reversed correspondingly), hence we
depict this local relation as:
\begin{equation}\label{local1a}
\begin{tikzpicture}[>=stealth,baseline=0.8cm]
  \draw [shift={+(4.8,0)}](0,0) -- (1.6,1.6)[-];
  \draw [shift={+(4.8,0)}](1.6,0) -- (0,1.6)[-];
  \draw [shift={+(4.8,0)}](0.8,0) .. controls (0,0.8) .. (0.8,1.6)[-];
  \draw [shift={+(5,0)}](2,0.8) node {=};
  \draw [shift={+(5.2,0)}](2.4,0) -- (4,1.6)[-];
  \draw [shift={+(5.2,0)}](4,0) node{\qquad ,} -- (2.4,1.6)[-];
  \draw [shift={+(5.2,0)}](3.2,0) .. controls (4,0.8) .. (3.2,1.6)[-];
\end{tikzpicture}
\end{equation}
and first relation of (\ref{local1}) is satisfied when all the arrows of the strings
in the diagram are reversed.

We define the degrees of the planar string diagrams (as morphisms from
$X_1X_2\cdots X_k\to Y_1Y_2\cdots Y_l$ with $k\ge 0, l\ge 0$ integers,
$X_i,Y_j\in\{P,Q\}$, and $X_1X_2\cdots X_k={\mathbf 1}$ if $k=0$) from the degrees of
the simplest planar string diagrams:
\[
\begin{split}
&\begin{tikzpicture}[>=stealth]
\draw  (-.5,.5) node {$deg$};
\draw [->](0,0) -- (1,1);
\draw [->](1,0) -- (0,1);
\draw [shift={+(2,0)}] (-0.5,0.5) node {$=$};
\draw [shift={+(2.5,0)}] (0,.5) node {$deg$};
\draw [shift={+(3.0,0)}][<-](0,0) -- (1,1);
\draw [shift={+(3.0,0)}][<-](1,0) -- (0,1);
\draw [shift={+(5,0)}] (-0.5,0.5) node {$=$};
\draw [shift={+(5.5,0)}] (0,.5) node {$deg$};
\draw [shift={+(6.0,0)}][->](0,0) -- (1,1);
\draw [shift={+(6.0,0)}][<-](1,0) -- (0,1);
\draw [shift={+(8,0)}] (-0.5,0.5) node {$=$};
\draw [shift={+(8.5,0)}] (0,.5) node {$deg$};
\draw [shift={+(9.0,0)}][<-](0,0) -- (1,1);
\draw [shift={+(9.0,0)}][->](1,0) -- (0,1);
\draw [shift={+(9.0,0)}](1.5,.5) node{$ \quad = 0 \quad ;$};
\end{tikzpicture}\\
&\begin{tikzpicture}[>=stealth]
\draw  (-.5,-.25) node {$deg$};
\draw (0,0) arc (180:360:.5)[->] ;
\draw (1.75,-.25) node{$ \qquad =-1 \,\, ,\quad $};
\draw (3,-.25) node{$deg$};
\draw (4.5,-.5) arc (0:180:.5) [->];
\draw (5,-.25) node{$ \quad =0 \quad ;$};
\end{tikzpicture}\\
&\begin{tikzpicture}[>=stealth]
\draw  (-.5,-.25) node {$deg$};
\draw (0,0) arc (180:360:.5)[<-] ;
\draw (1.75,-.25) node{$ \quad = 0 \quad , \quad $};
\draw (3,-.25) node{$deg$};
\draw (4.5,-.5) arc (0:180:.5) [<-];
\draw (5,-.25) node{$ \quad = 1 \quad ; $};
\end{tikzpicture}
\end{split}
\]
and the degree of a dot labeled by $b$ equals the degree of $b$ in the graded algebra $\Lambda^*(V)$.
If $f$ and $g$ are two planar string diagrams, and they can be composed as $g\circ f\ne 0$, then we
define $deg(g\circ f):= deg(f) + deg(g)$. The degree of a string diagram is defined to be
the total sum of the degrees of all the planar strings in the string diagram. And finally,
if $f=\sum_i c_i f_i$ with each $f_i\ne 0$ a string diagram and $0\ne c_i\in\C$, then we define
\[
deg(f)=\max_i\{deg(f_i)\}.
\]

The $2$-morphisms of $'\H'$ constitute a ring with the obvious addition and the multiplication given
by the composition of planar string diagrams. This ring is also a graded $\C$-vector space. We define
\[
f\cdot g=0
\]
if the two planar string diagrams $f$ and $g$ can not be composed.

Be caution that in the above definition of $deg$, the notation $deg(f)$ is defined only for
$f=\sum_i c_i f_i$ with $c_i\in \C$ and those $f_i$ being the planar string diagrams connecting
the $1$-morphisms generated by $P$, $Q$, and $\mathbf 1$, without dimension shiftings.
For a planar string diagram $f: A\lra B$ (so $A$ and $B$ are finite compositions of $P$, $Q$, and
$\mathbf 1$), we make a convention that $f$ is a $2$-morphism in $\HHhom(A\lba k\rba, B\lba l\rba)$
for any $k,l\in \Z$. Therefore the $\C$-vector space $\HHhom(A, B)$ is graded, with
$\HHhom(A,B)_0$ denoting by the space of $2$-morphisms of degree $0$. To be more precise,
assume $f\in N={\rm Hom}_{'\H'}(X,Y)$, where
$X=X_1X_2\cdots X_{k_1}$ and $Y=Y_1Y_2\cdots Y_{k_2}$ with $X_i, Y_j\in \{P,Q\}$
(that is, $f$ is a planar string diagram which connects the sequence of $X_i$'s on the
bottom and the sequence of $Y_i$'s on the top), we also
view $f\in M={\rm Hom}_{'\H'}(X\lba l_1\rba, Y\lba l_2\rba)$. We then define
\[
deg_{M}(f)=deg_{N}(f)+l_1-l_2=deg(f)+l_1-l_2.
\]

In addition, we define the shifted degree (which is denoted by $sdeg$) of the planar diagrams
(also viewed as $2$-morphisms from $X_1X_2\cdots X_k\to Y_1Y_2\cdots Y_l$ with
$X_i,Y_j\in \{P,Q\}$):
\[
\begin{split}
&\begin{tikzpicture}[>=stealth]
\draw  (-.5,.5) node {$sdeg$};
\draw [->](0,0) -- (1,1);
\draw [->](1,0) -- (0,1);
\draw [shift={+(2,0)}] (-0.5,0.5) node {$=$};
\draw [shift={+(2.5,0)}] (0,.5) node {$sdeg$};
\draw [shift={+(3.0,0)}][<-](0,0) -- (1,1);
\draw [shift={+(3.0,0)}][<-](1,0) -- (0,1);
\draw [shift={+(5,0)}] (-0.5,0.5) node {$=$};
\draw [shift={+(5.5,0)}] (0,.5) node {$sdeg$};
\draw [shift={+(6.0,0)}][->](0,0) -- (1,1);
\draw [shift={+(6.0,0)}][<-](1,0) -- (0,1);
\draw [shift={+(8,0)}] (-0.5,0.5) node {$=$};
\draw [shift={+(8.5,0)}] (0,.5) node {$sdeg$};
\draw [shift={+(9.0,0)}][<-](0,0) -- (1,1);
\draw [shift={+(9.0,0)}][->](1,0) -- (0,1);
\draw [shift={+(9.0,0)}](1.5,.5) node{$ \quad = 0 \quad ;$};
\end{tikzpicture}\\
&\begin{tikzpicture}[>=stealth]
\draw  (-.5,-.25) node {$sdeg$};
\draw (0,0) arc (180:360:.5)[->] ;
\draw (1.75,-.25) node{$ \qquad =-1 \,\, ,\quad $};
\draw (3,-.25) node{$sdeg$};
\draw (4.5,-.5) arc (0:180:.5) [->];
\draw (5,-.25) node{$ \quad =0 \quad ;$};
\end{tikzpicture}\\
&\begin{tikzpicture}[>=stealth]
\draw  (-.5,-.25) node {$sdeg$};
\draw (0,0) arc (180:360:.5)[<-] ;
\draw (1.75,-.25) node{$ \quad = 0 \quad , \quad $};
\draw (3,-.25) node{$sdeg$};
\draw (4.5,-.5) arc (0:180:.5) [<-];
\draw (5,-.25) node{$ \quad = 1 \quad ; $};
\end{tikzpicture}
\end{split}
\]
and the shifted degree $sdeg(b)$ of a dot labeled by $b \in \Lambda^*(V)$ is
the same as $deg(b)$ defined above. By convention, we set $deg(0)=sdeg(0)=+\infty$.

\begin{Remark}\label{rmk:1}
We have the inequalities
\[
deg(f g)\ge deg(f)+deg(g), \qquad sdeg(f g) \ge sdeg(f) + sdeg(g)
\]
for the 1-morphisms $f$ and $g$, but the equalities don't hold in general,
even the composition $f\circ g$ exists, for example, the dot
morphism labeled by $v$ has both the degree and the shifted degree equal to
1, but the square is equal to 0.
\end{Remark}

From the local relations, we get
\begin{equation}\label{e:basic}
QP\cong PQ \oplus \mathbf{1}\oplus \mathbf 1 \lba 1\rba
\end{equation}
This can be seen by the following diagram:
\[
\xy
(0,20)*+ {QP}="T1";
(-25,0)*+ {PQ}="m1";
(0,0)*+ {\mathbf 1}="m2";
(25,0)*+{{\mathbf 1}\lba 1\rba\,\, .}="m3";
(0, -20)*+{QP}="T2";
 {\ar@/_0pc/_{ \begin{tikzpicture} \draw [<-] (0,0) -- (0.4,0.4);
 \draw [->] (0.4,0) -- (0,0.4);\end{tikzpicture}
  \xy
 \endxy}
  "T1";"m1"};
 {\ar@/_0.0pc/_{ \begin{tikzpicture} \draw (0.3,-.5) arc (-15:195:.25) [->];\end{tikzpicture} \xy
  \endxy} "T1";"m2"};
 {\ar@/^0pc/^{ \begin{tikzpicture} \draw (0.3,-.5) arc (-15:195:.25) [->];
  \filldraw (0.175,-0.225) circle (1.5pt); \draw (0.29,-0.1) node{$v$}; \end{tikzpicture}
  \xy
  \endxy} "T1";"m3"};
 {\ar@/_0pc/_{  \begin{tikzpicture} \draw [->] (0,0) -- (0.4,0.4);
 \draw [<-] (0.4,0) -- (0,0.4);\end{tikzpicture}
  \xy
  \endxy} "m1";"T2"};
 {\ar@/_0.0pc/_{ \begin{tikzpicture} \draw [shift={+(0,0.3)}](0,0) arc (180:360:.25)[->];
  \filldraw (0.125,0.105) circle (1.5pt) +(0,0.25) node{$v$}; \end{tikzpicture}
 \xy
    \endxy} "m2";"T2"};
 {\ar@/^0pc/^{ \begin{tikzpicture} \draw [shift={+(0,0.3)}](0,0) arc (180:360:.25)[->];\end{tikzpicture}
  \xy
    \endxy} "m3";"T2"};
\endxy
\]

By (\ref{local1}), we see that upward oriented crossings satisfy the relations of symmetric group
$S_n$. We have a canonical homomorphism
\[
{\C}[S_n] \longrightarrow {\End}_{'\H'}(P^n)
\]
from the group algebra of $S_n$ to the endomorphism ring of the $n$-th tensor power of $P$.
Through some diagram manipulation according to $2$-morphisms of $'\H'$, we can see that the
relation (\ref{local1}) is also satisfied by the downward oriented crossings, therefore
get the similar canonical homomorphism
\[
{\C}[S_n] \longrightarrow {\End}_{'\H'}(Q^n).
\]
For a positive integer $n$ and a partition $\lambda$ of $n$, let $e_{\lambda}$ be the associated Young
symmetrizer of ${\C}[S_n]$. The Young symmetrizers are idempotents of the group algebra ${\C}[S_n]$ given
by
\[
e_{\lambda}=a_{\lambda} b_{\lambda}/{n_{\lambda}}
\]
where $n_{\lambda}={n!/{\rm dim}(V_{\lambda})}$ with $V_{\lambda}$ the irreducible representation
corresponding to the partition $\lambda$, the elements
$a_{\lambda}, b_{\lambda}\in {\C}[S_n]$ are defined for a Young tableau $T$ of the partition
$\lambda$ by
\[
a_{\lambda}=\sum\limits_{g\in L_{\lambda}} g, \qquad b_{\lambda}=\sum\limits_{g\in L'_{\lambda}} sign(g)g,
\]
and
\[
\begin{split}
&L_{\lambda}=\{g\in S_n: \, g\text{ preserves each row of $T$}\}, \\
&L'_{\lambda}=\{g\in S_n: \, g\text{ preserves each column of $T$}\}.
\end{split}
\]
The irreducible representations of $S_n$ are given by ${\C}[S_n]e_{\lambda}$ for partitions $\lambda$ of
$n$. Let $e_{(n)}$ be the idempotent corresponding to the partition $(n)$ of $n$, with
${\C}[S_n]e_{(n)}$ being the trivial representation of $S_n$. See \cite{FH} for detail.

Let $'\H$ be the Karoubi envelope of $'\H'$. This is a $2$-category described in the following.
\begin{itemize}
\item {\bf Objects of $'\H$:} same as the objects of $'\H'$.
\item {\bf $1$-morphisms of $'\H$:} pairs $(M,e)$, where $M$ is a $1$-morphism of $'\H'$, and
$e$ is an idempotent $2$-morphism $M\xrightarrow{e} M$ of $'\H'$.
\item {\bf $2$-morphisms of $'\H$:} given two $1$-morphisms $(M, e)$ and $(M',e')$, the set
${\rm Hom}((M,e), (M',e'))$ consists of the $2$-morphisms $f:M\to M'$ of $'\H'$ such that
the diagram
\[
\xymatrix{M\ar[r]^{f}\ar[d]_{e} & M'\ar[d]^{e'}\\
M\ar[r]_{f} & M'}.
\]
\end{itemize}

The $2$-morphisms of $'\H$ constitute a ring, as so do the $2$-morphisms of $'\H'$.

For convenience, we also view $'\H'$ and $'\H$ as categories with the $1$-morphisms considered
as the objects and the $2$-morphisms as the morphisms (hence forgetting the set $\Z$ of $0$-cells).
Then $'\H'$ is a subcategory of $'\H$, and both are $\C$-linear additive monoidal categories.
For an idempotent $e:M\to M$ in $'\H'$ with $M$ a $1$-morphism, ${\id}_{M}-e$ is also an
idempotent, and we have a direct sum decomposition in $'\H$:
\[
M=(M,e)\oplus (M,{\id}_{M}-e).
\]

The 1-morphisms $P^n$ and $Q^n$
splits into direct sums of $(P^n,e_{\lambda})$ and $(Q^n,e_{\lambda})$ in $'\H$ respectively, over
partitions $\lambda$ of $n$. Define
\[
P_{\lambda}=(P^n,e_{\lambda}), \qquad Q_{\lambda}=(Q^n,e_{\lambda}), \qquad
P_n = (P^n, e_{(n)}), \qquad  Q_n=(Q^n, e_{(n)}).
\]

\begin{prop}\label{prop:1}
For any non-negative integers $m_1,m_2,n_1,n_2$ and integers $l_1,l_2$, we have
\begin{itemize}
\item[(1).] ${\HHhom}(P^{m_1}Q^{n_1}\lba l_1\rba, P^{m_2}Q^{n_2}\lba l_2\rba)_0=0$ unless
$m_1-n_1=m_2-n_2$ and $l_2\ge l_1$;
\item[(2).] the shifted degree of an element of ${\HHhom}(P^{m_1}Q^{n_1}, P^{m_2}Q^{n_2})$ is always
greater than $0$ (therefore greater than or equal to $1/2$), provided with $(m_1,n_1)\ne (m_2,n_2)$;
\item[(3).] there are no negative degree (respectively, shifted degree) endomorphisms
of $P^{m_1} Q^{n_1}\lba l\rba$, and the algebra of degree (respectively, shifted degree) zero endomorphisms
of $P^{m_1} Q^{n_1}\lba l\rba$ is isomorphic to ${\C}[S_{m_1}]\otimes {\C}[S_{n_1}]$.
\end{itemize}
\end{prop}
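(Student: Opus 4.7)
My plan is to prove all three parts by reducing any planar string diagram from $P^{m_1}Q^{n_1}$ to $P^{m_2}Q^{n_2}$ to a normal form, counting degree contributions, and then appealing to the natural representation of $'\mathcal H$ (to be built in Section \ref{section:3}) to identify the degree-zero endomorphism algebra.

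The matching condition $m_1-n_1=m_2-n_2$ in (1) is automatic from the $2$-category structure: since $P:n\to n+1$ and $Q:n+1\to n$, the composite $P^{m_1}Q^{n_1}$ shifts $n$ by $m_1-n_1$, and no $2$-morphism exists between $1$-morphisms with mismatched source or target $0$-cells.

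For the degree bounds, I would place any planar diagram $f:P^{m_1}Q^{n_1}\to P^{m_2}Q^{n_2}$ in a normal form. Repeated use of the isomorphism (\ref{e:basic}), $QP\cong PQ\oplus\mathbf 1\oplus\mathbf 1\lba 1\rba$, pushes every downward strand past every upward strand, introducing cap/cup pairs with dot decorations. After isotoping all caps, cups and dots into a fixed middle layer, using the collision rule and $v^2=0$ in $\Lambda^*(V)$ to reduce every strand to at most one $v$-dot, and using relation (\ref{local2}) to kill every bare bubble, one obtains $f$ as a $\C$-linear combination of diagrams with a lower braid on $m_1+n_1$ strands, a middle layer of decorated bubbles, and an upper braid on $m_2+n_2$ strands. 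Counting the generator degrees from Section \ref{section:2}, each surviving cap or cup contributes at least $0$, each $v$-dot contributes $1$, and crossings contribute $0$. If $(m_1,n_1)\ne(m_2,n_2)$ then at least one cap--cup pair is needed, and it must carry a $v$-dot to survive (\ref{local2}), so its shifted degree is strictly positive; this proves (2). The equality $\deg_M(f)=\deg(f)+l_1-l_2$ together with $\deg(f)\ge 0$ then forces $l_2\ge l_1$ for degree-zero elements, proving (1), and specialising to $(m_1,n_1)=(m_2,n_2)$ gives the first half of (3).

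For the algebra identification in (3), the normal form shows that any degree-zero (respectively shifted-degree-zero) endomorphism of $P^{m_1}Q^{n_1}\lba l\rba$ has no caps, cups or $v$-dots, hence is a linear combination of pure braid diagrams splitting into an upward braid on the $m_1$ $P$-strands and a downward braid on the $n_1$ $Q$-strands. The canonical homomorphisms $\C[S_{m_1}]\to\End_{'\H'}(P^{m_1})$ and $\C[S_{n_1}]\to\End_{'\H'}(Q^{n_1})$ coming from (\ref{local1}) combine to give a surjection
\[
\Phi:\C[S_{m_1}]\otimes\C[S_{n_1}]\twoheadrightarrow\End_{'\H'}(P^{m_1}Q^{n_1}\lba l\rba)_0.
\]
The main obstacle is injectivity of $\Phi$, since the local relations could a priori impose extra collapses among braid diagrams; this cannot be settled by diagrammatics alone. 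I would resolve it via the natural representation of the Karoubi envelope $'\mathcal H$ of Section \ref{section:3}, in which the image of $\Phi$ acts faithfully as $S_{m_1}\times S_{n_1}$ on a suitable object built from $\Lambda^*(V)$-related constructions, forcing $\Phi$ to be an isomorphism.
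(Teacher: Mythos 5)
Your core observation is the same as the paper's one-sentence proof: because all $P$'s sit to the left of all $Q$'s in both source and target, any cap at the bottom of the diagram or cup at the top must be clockwise oriented, and a clockwise cap or cup has degree $\ge 0$ and shifted degree $>0$ (in the half-integer normalization of $sdeg$ from Remark~\ref{rmk:2}); additivity of degree under composition then gives (1), (2), and the first half of (3) at once. Your normal-form machinery — pushing $Q$'s past $P$'s via (\ref{e:basic}), isotoping caps/cups/dots to a middle layer, killing curls with (\ref{local2}) — is a heavier route to this same count, and one detail there is slightly off: the shifted degree of a clockwise cap or cup is already strictly positive on its own, so you do not need a $v$-dot to make the contribution positive (and (\ref{local2}) kills a specific curl, not every ``bare bubble''). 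Where you genuinely add to the paper is the explicit treatment of injectivity of $\Phi:\C[S_{m_1}]\otimes\C[S_{n_1}]\to\End_{'\H'}(P^{m_1}Q^{n_1})_0$ in part (3): the paper's proof is silent on it, and your worry that the local relations could collapse braid diagrams is legitimate. Appealing to the Fock-space representation of Section~\ref{section:3}, where upward and downward crossings act as the honest $S_{m_1}\times S_{n_1}$ action on induced modules, is a valid way to establish faithfulness and is effectively how \cite{CL} settles the analogous point; logically it is fine to invoke, since the representation is constructed independently of Proposition~\ref{prop:1}. So the proposal is sound: it is a fleshed-out version of the paper's degree-counting argument together with an explicit check the paper leaves implicit.
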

\begin{proof}
The conclusions are straightforward by noticing that
a cup planar string to $P^{m_2}Q^{n_2}$ or a cap planar string from $P^{m_1}Q^{n_1}$
(the arrow has to be clockwise since the $P$ is on the left of $Q$)
has degree greater than or equal to $0$, and shifted degree greater than $0$.
\end{proof}

\begin{theorem}[\cite{CL}]\label{theorem:1}
In the $2$-category $'\mathcal{H}$, the $1$-morphisms $P_n$ and $Q_n$ satisfy
\begin{eqnarray}
& &Q_n P_m \cong \bigoplus_{k\geq 0}\bigoplus_{l=0}^k P_{m-k}Q_{n-k}\lba l\rba,\label{$1}\\
& &Q_n Q_m \cong Q_m Q_n,\label{$2}\\
& &P_n P_m \cong P_m P_n.\label{$3}
\end{eqnarray}
\end{theorem}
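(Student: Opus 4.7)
For the two commutativity statements $Q_n Q_m \cong Q_m Q_n$ and $P_n P_m \cong P_m P_n$, my plan is to use the braiding on strands. Since the upward crossings satisfy the symmetric group relations by (\ref{local1}), one obtains a canonical ``$(n,m)$-braid'' 2-morphism $\sigma_{n,m}: P^n P^m \to P^m P^n$ that intertwines the two $S_n\times S_m$-actions on source and target. Pre- and post-composition with the idempotents $e_{(n)}\otimes e_{(m)}$ then produce a 2-morphism between $P_n P_m$ and $P_m P_n$ whose inverse is the analogous map built from $\sigma_{m,n}$. The same argument applied to downward crossings (which were noted to satisfy the symmetric group relations as well) yields $Q_n Q_m \cong Q_m Q_n$.

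For the main relation $Q_n P_m \cong \bigoplus_{k\ge 0}\bigoplus_{l=0}^k P_{m-k} Q_{n-k}\lba l\rba$, the plan is to construct, for each $0 \le l \le k \le \min(m,n)$, explicit 2-morphisms
\[
\alpha_{k,l} : P_{m-k} Q_{n-k}\lba l \rba \longrightarrow Q_n P_m, \qquad
\beta_{k,l} : Q_n P_m \longrightarrow P_{m-k} Q_{n-k}\lba l \rba,
\]
and then verify (i) orthogonality $\beta_{k',l'}\circ\alpha_{k,l} = \delta_{k,k'}\delta_{l,l'}\cdot\id$ and (ii) completeness $\sum_{k,l}\alpha_{k,l}\circ\beta_{k,l}=\id_{Q_n P_m}$, which together give the claimed direct sum decomposition. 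The map $\alpha_{k,l}$ is built by pre-composing with $e_{(m-k)}\otimes e_{(n-k)}$, inserting $k$ cups of type $QP$ to create $k$ new $Q$-strands and $k$ new $P$-strands, decorating exactly $l$ of these cups with a dot labeled $v$, and post-composing with $e_{(n)}\otimes e_{(m)}$; the map $\beta_{k,l}$ is the upside-down version built from caps. The outer symmetrizers make the choice of which $l$ cups are decorated irrelevant. Proposition \ref{prop:1} guarantees that Hom spaces of shifted degree zero between summands with distinct $(k,l)$ vanish, so no summands interfere.

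The main obstacle will be completeness. Orthogonality is a local diagrammatic check: pairing a cap on top of a cup with matched decorations reduces via the bubble relation to $\tr(b)$ for $b\in\Lambda^*(V)$, so mismatched dot counts give zero, while matched ones reduce to a smaller picture that the outer symmetrizers absorb to the identity. For completeness I plan induction on $n$: the base case $n=0$ is trivial, and for the inductive step I would write $Q^n P^m = Q\cdot(Q^{n-1}P^m)$, apply the inductive decomposition, and then pull the remaining $Q$ past each $P_{m-k}$ by iterating (\ref{e:basic}), which splits each elementary $QP$ as $PQ \oplus \mathbf 1 \oplus \mathbf 1\lba 1\rba$. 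Applying $e_{(n)}$ on top symmetrizes the resulting terms and collects them into the blocks $\bigoplus_{l=0}^{k'} P_{m-k'}Q_{n-k'}\lba l\rba$. The count of shifts matches because selecting $l$ of the $k'$ cup-cap pairs to carry a $v$-dot contributes exactly $l$ units of dimension shift while undecorated pairs contribute $0$, which reproduces the quantum integer $[k'+1]=1+t+\cdots+t^{k'}$ occurring in the decategorified relation (\ref{qp}).
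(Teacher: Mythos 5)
Your treatment of $P_nP_m\cong P_mP_n$ and $Q_nQ_m\cong Q_mQ_n$ matches the paper exactly: both use the braid $2$-morphism intertwining the symmetrizers, with invertibility following from the Reidemeister II relation (\ref{local1}). For the main relation (\ref{$1}), however, you and the paper take genuinely different routes, both ultimately resting on the basic isomorphism (\ref{e:basic}). The paper never constructs explicit inclusion/projection $2$-morphisms $\alpha_{k,l},\beta_{k,l}$; instead it iterates (\ref{e:basic}) to obtain $QP^n\cong P^nQ\oplus(P^{n-1})^{\oplus n}\oplus(P^{n-1}\lba1\rba)^{\oplus n}$, observes that the idempotent $e_{(n)}=\frac{1}{n!}\sum_{\sigma}\sigma$ collapses those $n$ copies to the diagonal $P_{n-1}$ (resp. $P_{n-1}\lba1\rba$) inside $QP_n\cong P_nQ\oplus P_{n-1}\oplus P_{n-1}\lba1\rba$, then inducts on the number of $Q$'s to write $Q^mP_n$ as a sum with binomial multiplicities, and finally projects by $e_{(m)}$. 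Your explicit ``orthogonality + completeness'' decomposition is closer to the original argument in \cite{CL} and is more self-contained, since it does not need the multiplicity bookkeeping; the paper's version is shorter because it defers the idempotent bookkeeping to a remark about diagonals. Either works, and both need Proposition \ref{prop:1} to guarantee that different $(m-k,n-k,l)$-summands cannot interfere, which you correctly invoke.

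One detail you should fix before carrying this out: the role of the $v$-dot is inverted in your construction. In the decomposition (\ref{e:basic}), the degree-$0$ inclusion of the unshifted copy $\mathbf 1\to QP$ is the cup carrying a $v$-dot, while the inclusion of the shifted copy $\mathbf 1\lba1\rba\to QP$ is the undotted cup (because the clockwise $QP$-cup has $deg=-1$, so the dot or the shift must bring it back to $0$). Consequently $\alpha_{k,l}:P_{m-k}Q_{n-k}\lba l\rba\to Q_nP_m$ should decorate $k-l$ of the $k$ cups with $v$, not $l$ of them; otherwise $deg_M(\alpha_{k,l})=-k+2l\neq 0$ in general. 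The corresponding fix should be made for $\beta_{k,l}$, and once done the bubble identities $\tr(v)=1$, $\tr(1)=0$ do give the orthogonality you describe (a matched cup-cap pair must carry exactly one $v$ in total). Also be a little careful in the completeness induction: the iterative step is cleanest applied to $Q^nP^m$ before any idempotents are inserted, and then projected by $e_{(n)}\otimes e_{(m)}$ at the very end, as the paper does; applying $e_{(n)}$ piecemeal inside the induction invites mistakes because $Q_n$ does not factor through $Q\cdot Q_{n-1}$.
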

\begin{proof}
The proof of this theorem is same as that of \cite{CL} in the case of
${\dim}V=2$ and a non-trivial finite subgroup $\Gamma$ of ${\rm SL}_2(\C)$
acting on $V$. We make a couple of simple comments here. To prove $P_nP_m\cong P_mP_n$, let
$\begin{tikzpicture}[>=stealth, baseline=0]\draw (0,0) rectangle (1.6,0.36); \draw (0.8,0.18) node {$(n)$};
\end{tikzpicture}$ denote the $1$-morphism $P_n$, and $\varphi: P_nP_m\to P_mP_n$,
$\psi:P_mP_n\to P_nP_m$ given respectively by

$$
\begin{tikzpicture}[scale=.75][>=stealth]
\draw (1,2.25) node {$\varphi:=$};
\draw (2.25,1) rectangle (4.25,1.5);
\draw (3.25,1.25) node {$(n)$};
\draw (4.4,1) rectangle (6.4,1.5);
\draw (5.4,1.25) node {$(m)$};

\draw [shift = {+(0,2)}](2.25,1) rectangle (4.25,1.5);
\draw [shift = {+(0,2)}](3.25,1.25) node {$(m)$};
\draw [shift = {+(0,2)}](4.4,1) rectangle (6.4,1.5);
\draw [shift = {+(0,2)}](5.4,1.25) node {$(n)$};

\draw (3,1.5) -- (6,3) [->];
\draw (6,1.5) -- (3,3) [->];

\draw (9.0,2.25) node {and};

\draw [shift = {+(9.7,0)}](1,2.25) node {$\psi:=$};
\draw [shift = {+(10,0)}](2.25,1) rectangle (4.25,1.5);
\draw [shift = {+(10,0)}](3.25,1.25) node {$(m)$};
\draw[shift = {+(10,0)}] (4.4,1) rectangle (6.4,1.5);
\draw [shift = {+(10,0)}](5.4,1.25) node {$(n)$};

\draw[shift = {+(10,0)}] [shift = {+(0,2)}](2.25,1) rectangle (4.25,1.5);
\draw [shift = {+(10,0)}][shift = {+(0,2)}](3.25,1.25) node {$(n)$};
\draw [shift = {+(10,0)}][shift = {+(0,2)}](4.4,1) rectangle (6.4,1.5);
\draw [shift = {+(10,0)}][shift = {+(0,2)}](5.4,1.25) node {$(m)$};

\draw [shift = {+(10,0)}](3,1.5) -- (6,3) [->];
\draw [shift = {+(10,0)}](6,1.5) -- (3,3) [->];
\end{tikzpicture}\,\, ,
$$
where each upward oriented strand is an abbreviation of $m$ or $n$ upward
oriented strands accordingly. We then get from the first relation of
(\ref{local1}) that
\[
\begin{tikzpicture}[scale=.75][>=stealth]
\draw (0,3.25) node {$\psi\circ\varphi=$};
\draw (2.25,1) rectangle (4.25,1.5);
\draw (3.25,1.25) node {$(n)$};
\draw (4.4,1) rectangle (6.4,1.5);
\draw (5.4,1.25) node {$(m)$};
\draw [shift = {+(0,2)}](2.25,1) rectangle (4.25,1.5);
\draw [shift = {+(0,2)}](3.25,1.25) node {$(m)$};
\draw [shift = {+(0,2)}](4.4,1) rectangle (6.4,1.5);
\draw [shift = {+(0,2)}](5.4,1.25) node {$(n)$};
\draw (3,1.5) -- (6,3) [->];
\draw (6,1.5) -- (3,3) [->];

\draw [shift = {+(0,4)}](2.25,1) rectangle (4.25,1.5);
\draw [shift = {+(0,4)}](3.25,1.25) node {$(n)$};
\draw [shift = {+(0,4)}](4.4,1) rectangle (6.4,1.5);
\draw [shift = {+(0,4)}](5.4,1.25) node {$(m)$};
\draw [shift = {+(0,2)}](3,1.5) -- (6,3) [->];
\draw [shift = {+(0,2)}](6,1.5) -- (3,3) [->];

\draw [shift = {+(7,2)}](1,1.25) node {$=$};

\draw [shift = {+(8,0)}](2.25,1) rectangle (4.25,1.5);
\draw [shift = {+(8,0)}](3.25,1.25) node {$(n)$};
\draw [shift = {+(8,0)}](4.4,1) rectangle (6.4,1.5);
\draw [shift = {+(8,0)}](5.4,1.25) node {$(m)$};
\draw [shift = {+(8,4)}](2.25,1) rectangle (4.25,1.5);
\draw [shift = {+(8,4)}](3.25,1.25) node {$(n)$};
\draw [shift = {+(8,4)}](4.4,1) rectangle (6.4,1.5);
\draw [shift = {+(8,4)}](5.4,1.25) node {$(m)$};

\draw [shift = {+(8,0)}] (3.5,1.5) -- (3.5,5) [->];
\draw [shift = {+(8,0)}] (5.5,1.5) -- (5.5,5) [->];
\end{tikzpicture}
\]
by pulling the braid straight, thus $\psi\circ\varphi$ is the identity $2$-morphism
on $P_nP_m$, and so is $\varphi\circ\psi$ on $P_mP_n$. In the same way we can
see $Q_nQ_m\cong Q_mQ_n$. To see why (\ref{$1}) is true, we at first deduce from
(\ref{e:basic}) that
\begin{equation}\label{e:cate}
QP^n\cong P^nQ\oplus\left(P^{n-1}\oplus P^{n-1}\lba 1\rba\right)^{\oplus n}
\cong P^nQ\oplus \left(P^{n-1}\right)^{\oplus n}\oplus \left(P^{n-1}\lba 1\rba\right)^{\oplus n},
\end{equation}
therefore in the $2$-category $'\H$, we have
\[
QP_n=Q(P^n,e_{(n)})\cong P_nQ\oplus P_{n-1}\oplus P_{n-1}\lba 1\rba,
\]
since $e_{(n)}=\frac{1}{n!}\sum_{\sigma\in S_n} \sigma$, the terms
$\left(P^{n-1}\right)^{\oplus n}$ and $\left(P^{n-1}\lba 1\rba \right)^{\oplus n}$
of (\ref{e:cate}) are respectively projected onto $P_{n-1}$ and $P_{n-1}\lba 1\rba$ as diagonals
when $QP^n$ is projected onto $QP_n$. Then an induction on $m$ gives
\[
Q^mP_n\cong \bigoplus_{j\ge 0} \left(\bigoplus_{l=0}^jP_{n-j}Q^{m-j}\lba l\rba^{\oplus \binom{j}{l}}\right)^{\oplus \binom{m}{j}},
\]
we get (\ref{$1}) after we project $Q^m$ to $Q_m=(Q^m,e_{(m)})$ in the above equation.
\end{proof}

From Theorem \ref{theorem:1}, we get a natural homomorphism of algebras
\begin{equation}\label{e:pi}
\pi:\,\, 'H_{\Zt} \longrightarrow K_0('\H)
\end{equation}
by sending $p_m$ to $[P_m]$, $q_n$ to $[Q_n]$, and $t^l$ to $[{\mathbf 1}\lba l\rba]$, where
$[X]$ denotes the isomorphism class of the $1$-morphism $X$ of $'\H$.

\begin{theorem}\label{theorem:1.5}
$\pi$ is an isomorphism of algebras.
\end{theorem}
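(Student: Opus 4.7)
The plan is to verify surjectivity and injectivity of $\pi$ separately; that $\pi$ is a well-defined algebra homomorphism is precisely the content of Theorem \ref{theorem:1}.

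For surjectivity, every $1$-morphism of $'\H$ is a direct summand of a composition $X_1 X_2 \cdots X_k \lba l \rba$ with $X_i \in \{P, Q\}$. Iterated application of the basic decomposition (\ref{e:basic}) inside $K_0('\H)$ rewrites such a class as a $\Zt$-linear combination of $[P^a Q^b \lba l' \rba]$. Then the Jacobi--Trudi identity expresses each Schur-type summand $[(P^a, e_\nu)]$ as a determinant in the $[P_n]$, while $[P^a] = \sum_{\nu \vdash a}(\dim V_\nu)\,[(P^a, e_\nu)]$, so each $[P^a]$ (and likewise $[Q^b]$) is a $\Z$-polynomial in the $[P_n]$ and $[Q_n]$. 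Hence the image of $\pi$ is all of $K_0('\H)$.

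For injectivity, the strategy is to match $\Zt$-bases on both sides. From (\ref{qp})--(\ref{pq}), the monomials $p_\lambda q_\mu$ indexed by pairs of partitions span $'H_{\Zt}$ over $\Zt$, by pushing $q$'s past $p$'s and using commutativity within each block. Setting $\widetilde{P}_\nu := (P^{|\nu|}, e_\nu)$ and $\widetilde{Q}_\tau := (Q^{|\tau|}, e_\tau)$, I would then show that the $\widetilde{P}_\nu \widetilde{Q}_\tau \lba l \rba$ are pairwise non-isomorphic indecomposables in $'\H$. By Proposition \ref{prop:1}(3), $\End_{'\H'}(P^m Q^n)$ has no negative-degree elements and its degree-zero part is the semisimple algebra $\C[S_m] \otimes \C[S_n]$, whose primitive idempotents are the $e_\nu \otimes e_\tau$; standard idempotent-lifting for nonnegatively graded algebras then identifies these as the complete list of indecomposable summands of $P^m Q^n$. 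The transition from the classes $\pi(p_\lambda q_\mu) = [P_{\lambda_1}\cdots P_{\lambda_r} Q_{\mu_1}\cdots Q_{\mu_s}]$ to the indecomposable classes $[\widetilde{P}_\nu \widetilde{Q}_\tau]$ is given by the Kostka matrix on each side of the $Q$'s, which is unitriangular over $\Z$; hence linear independence of the indecomposable classes forces linear independence of the $\pi(p_\lambda q_\mu \cdot t^l)$ in $K_0('\H)$, giving injectivity.

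The main obstacle is the indecomposable identification: verifying that every primitive idempotent in the full graded endomorphism algebra $\End_{'\H'}(P^m Q^n)$ is equivalent to one of the $e_\nu \otimes e_\tau$ sitting in the degree-0 part, and that distinct pairs $(\nu, \tau)$ remain non-equivalent after passing to the Karoubi envelope. This hinges on the absence of negative-degree endomorphisms given by Proposition \ref{prop:1}(3), which guarantees that the positive-degree piece behaves like a nilpotent ideal over the semisimple bottom $\C[S_m] \otimes \C[S_n]$, so idempotents can be lifted and classified through it. Once this is established, comparing the two $\Zt$-bases completes the proof.
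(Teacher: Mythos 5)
Your proof proposal is correct in outline but takes a genuinely different route from the paper for the injectivity half. The paper proves surjectivity via Proposition \ref{prop:2} (as you do, up to the Kostka/Jacobi--Trudi language) and then establishes injectivity by passing through the Fock space representation of Section \ref{section:3}: it composes $\pi$ with the functor $j$ of (\ref{e:repofH}) and the characteristic-map isomorphism of Theorem \ref{theorem:4}, landing in $\End(\mathfrak{S}_{\Ccrs})$, and invokes faithfulness of the Fock space representation of $'H_{\Ct}$. Your argument instead stays entirely inside Section \ref{section:4}: you classify the indecomposable $1$-morphisms of $'\H$ as the $\widetilde{P}_\nu\widetilde{Q}_\tau\lba l\rba$ and match the resulting $\Z$-basis of $K_0('\H)$ against the spanning set $\{p_\lambda q_\mu t^l\}$ via a unitriangular (Kostka) change of basis. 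This is self-contained and avoids any appeal to the derived-category representation, at the cost of needing the Krull--Schmidt property and a careful idempotent-lifting argument; the paper's route offloads exactly that work onto the known faithfulness of the Fock space.

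Two points you should tighten. First, the phrase ``the positive-degree piece behaves like a nilpotent ideal'' needs to actually be the statement that the strictly positive part of $\End_{'\H'}(P^mQ^n)$ is nilpotent; this is what the paper secures by observing that the filtration (\ref{e:filtration1}) is finite (equivalently, $\End_{'\H'}(P^mQ^n)$ is finite-dimensional over $\C$). Without that finiteness, idempotent lifting and Krull--Schmidt are not automatic, and ``nonnegatively graded with semisimple degree-zero part'' alone does not suffice. Second, to conclude that the $\widetilde{P}_\nu\widetilde{Q}_\tau\lba l\rba$ are pairwise non-isomorphic you must also rule out isomorphisms across different shifts $l\ne l'$ and across different pairs $(m,n)\ne(m',n')$ with $m-n=m'-n'$; both follow from Proposition \ref{prop:1}(1)--(2) (a degree-$0$ morphism $P^mQ^n\lba l\rba\to P^mQ^n\lba l'\rba$ exists only when $l'\ge l$, forcing $l=l'$ for an isomorphism, and morphisms between distinct $(m,n)$ have strictly positive shifted degree so cannot compose to the identity), but it is worth saying explicitly. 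With these details filled in, your basis-matching argument works and in fact yields as a byproduct that the $p_\lambda q_\mu t^l$ form a $\Zt$-basis of $'H_{\Zt}$, something the paper's representation-theoretic argument gets from \cite{FJW2} instead.
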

\begin{proof}
The proof of this theorem will be given at the end of Section \ref{section:4}.
\end{proof}

\section{A Representation of the Category $'\H$}\label{section:3}
We will give a Fock representation of the $'\H$ in this section.

Let $A_1={\Sym}^*(V^{\vee})$, which can be identified with the
polynomial ring ${\C}[v^{\vee}]$. For each positive
integer $n$, define
\begin{equation*}
A_n=\left({\Sym}^*({\Vv})\otimes {\Sym}^*({\Vv})\otimes \cdots \otimes {\Sym}^*({\Vv})\right)
\rtimes S_n
\end{equation*}
where the symmetric group $S_n$ acts by permuting the $n$ terms in the product. These
algebras inherit the natural grading from ${\Sym}^*({\Vv})$. We denote
by ${\mathcal C}_n=D({\mathbf A_n\gmod})$ the bounded derived
category graded (left) $A_n$-modules which are finite dimensional over $\C$ with
the action of $v^{\vee}$ by $0$.
An object of ${\mathcal C}_n$ is in fact a finite dimensional ${\C}[S_n]$-module
(also say $S_n$-module) viewed as a torsion $A_n$-module, and is usually
expressed by its free resolution in the derived category ${\mathcal C}_n =D({\mathbf A_n\gmod})$.
We have the maps
\begin{equation*}
{\C}  \mathop{\longrightarrow}\limits^{i} A_1 \mathop{\longrightarrow}\limits^{p} {\C}
\end{equation*}
with $i$ the natural inclusion and $p$ the projection which takes ${\Sym}^{>0}(V)$ to zero. Thus
each $\C$-module is also an $A_1$-module and vice versa.

From the embedding $S_n=S_n\times S_1\hookrightarrow S_{n+1}$, we have the natural inclusion $A_n\otimes A_1
\hookrightarrow A_{n+1}$. Let $P(n) := A_{n+1}  \otimes_{A_n  \otimes A_1 } (A_n  \otimes \C)$ be
an $(A_{n+1}, A_n)$-bimodule, and $(n)Q  := (A_n  \otimes \C) \otimes_{A_n  \otimes A_1 } A_{n+1}$ be
an $(A_n,A_{n+1})$-bimodule. We define
the functor $\fP(n): D({\mathbf A_n  \gmod}) \rightarrow D({\mathbf A_{n+1}  \gmod})$ by
\begin{equation}\label{e:p}
\fP(n)(\cdot) := P (n) \otimes_{A_n } (\cdot),
\end{equation}
and similarly, the functor $(n)\fQ: D({\mathbf A_{n+1}  \gmod}) \rightarrow D({\mathbf A_n  \gmod})$ by
\begin{equation}\label{e:q}
(n)\fQ(\cdot) := (n)Q  \otimes_{A_{n+1} } (\cdot),
\end{equation}
with the obvious definitions of these two functors on the corresponding morphisms of the
categories. Moreover, we define the dimension shifting functor ${\id}\lba 1\rba$ by
\begin{equation*}
\begin{array}{rcl}
{\id}\lba 1\rba(n):  D({\mathbf A_n\gmod}) & \lra & D({\mathbf A_n\gmod})\\
M & \longmapsto  & M[1]
\end{array}
\end{equation*}
where $M[1]$ is the complex shifted to the left by one
degree from the complex $M$ in $D({\mathbf A_n\gmod})$ and $\id$ is the
identity functor on the derived categories $D({\mathbf A_n\gmod})$.

In the following, we define the following natural transformations:
\begin{itemize}
\item[(1)] $X(1): \fP \rightarrow \fP$ and $X(1): \fQ \rightarrow \fQ$,
\item[(2)] $X(b): \fP \rightarrow \fP [1]\{-1\}$ and $X(b): \fQ \rightarrow \fQ [1]\{-1\}$ for any $b \in V$,
\item[(3)] $T: \fP \fP \rightarrow \fP \fP$, $T: \fQ \fQ \rightarrow \fQ \fQ$, $T: \fQ \fP \rightarrow \fP \fQ$, and $T: \fP \fQ \rightarrow \fQ \fP$,
\item[(4)] $\adj: \fQ \fP \rightarrow \id$ and $\adj: \fP \fQ \rightarrow \id [1]\{-1\}$, and
\item[(5)] $\adj: \id \rightarrow \fQ \fP [-1]\{1\}$ and $\adj: \id \rightarrow \fP \fQ$.
\end{itemize}
where $[\cdot]$ denotes the cohomological dimension shift and $\{\cdot\}$ the grading shift.
The degree shifting notation $\{\cdot\}$ is attached only to reflect the degree of the
natural transformation, it doesn't impose impact on the functors generated by
$\fP$, $\fQ$, and $\id$.  The dimension
shifting operator $\lba \,\cdot\,\rba$ affects the Grothendieck ring of the category $'\H$.

{\bf (1). $X(1): \fP \rightarrow \fP$ and $X(1): \fQ \rightarrow \fQ$} are the identity natural transformations in $\fP$ and
$\fQ$ respectively.

{\bf (2). The definitions of $X(b): \fP \rightarrow \fP [1]\{-1\}$ and $X(b): \fQ \rightarrow \fQ [1]\{-1\}$ for any $b \in V$.} Viewed as an $A_1$-module, a free resolution of $\C$ is:
\begin{equation}\label{e:kres1}
\begin{array}{cccccccc}
0 & \mathop{\longrightarrow} &A_1\otimes V^\vee & \mathop{\longrightarrow}\limits^{d} &A_1 &\longrightarrow & {\C} \longrightarrow &0\\
  &              & f\otimes w &\longmapsto                 &fw. & & &
\end{array}
\end{equation}
We define a map $\phi_b$ by
\begin{equation*}
\begin{array}{rccc}
{\phi}_b: &A_1 \otimes V^{\vee} & \longrightarrow & A_1\{-1\} \\
&f\otimes w        & \longmapsto & \la w,b\ra f
\end{array}
\end{equation*}
where $\{-1\}$ denotes the degree shifting in $A_1={\Sym}^*(V^{\vee})$. Then we have the
following commutative diagram:
\begin{equation*}
\xymatrix{
         0\ar[r]\ar[d]              & A_1\otimes V^{\vee}\ar[r]^{d}\ar[d]^{\phi_b} &A_1\ar[r]\ar[d]  &0\ar@{=}[d]\\
A_1\otimes V^{\vee}\{-1\}\ar[r]^{-d} & A_1\{-1\}\ar[r]                     &0 \ar[r]         &0
}
\end{equation*}
which defines a morphism $X(b): {\C}\longrightarrow {\C}[1]\{-1\}$ in the derived category
$D({\mathbf A_1\gmod})$. This morphism then induces a morphism
\begin{equation*}
\xymatrix{
X(b):& P(n)\ar[r]\ar@{=}[d]     & P(n)[1]\{-1\} \ar@{=}[d]\\
&A_{n+1}\otimes_{A_n\otimes A_1}(A_n\otimes {\C}) & A_{n+1}\otimes_{A_n \otimes A_1}(A_n\otimes {\C})[1]\{-1\}
}
\end{equation*}
Thus we get the natural transformation $X(b): \fP\longrightarrow \fP[1]\{-1\}$. In the same way, we
apply the following free resolution of $\C$:
\begin{equation}\label{e:kres2}
\begin{array}{cccccccc}
0 & \mathop{\longrightarrow} &V^{\vee}\otimes A_1  & \mathop{\longrightarrow}\limits^{d} &A_1 &\longrightarrow & {\C} \longrightarrow &0\\
  &              & w\otimes f &\longmapsto                 &wf & & &
\end{array}
\end{equation}
and define a map $\psi_b$ by
\begin{equation*}
\begin{array}{rccc}
{\psi}_b: &V^{\vee}\otimes A_1 & \longrightarrow & A_1\{-1\} \\
&w\otimes f        & \longmapsto & -\la w,b\ra f
\end{array}
\end{equation*}
Then the following commutative diagram
\begin{equation*}
\xymatrix{
         0\ar[r]\ar[d]              & V^{\vee}\otimes A_1\ar[r]^{d}\ar[d]^{\psi_v} &A_1\ar[r]\ar[d]  &0\ar@{=}[d]\\
V^{\vee}\otimes A_1\{-1\}\ar[r]^{-d} & A_1\{-1\}\ar[r]                     &0 \ar[r]         &0
}
\end{equation*}
gives a morphism $X(b): {\C}\longrightarrow {\C}[1]\{-1\}$ in the derived category
$D({\mathbf A_1\gmod})$ and induces a morphism
\begin{equation*}
\xymatrix{
X(b):& (n)Q\ar[r]\ar@{=}[d]     & (n)Q[1]\{-1\} \ar@{=}[d]\\
&(A_n\otimes {\C})\otimes_{A_n\otimes A_1} A_{n+1} & (A_n\otimes {\C})\otimes_{A_n\otimes A_1} A_{n+1}[1]\{-1\}
}
\end{equation*}
Thus we get the natural transformation $X(b): \fQ\longrightarrow {\fQ}[1]\{-1\}$.

{\bf (3). The definitions of  $T: \fP \fP \rightarrow \fP \fP$, $T: \fQ \fQ \rightarrow \fQ \fQ$, $T: \fQ \fP \rightarrow \fP \fQ$, and $T: \fP \fQ \rightarrow \fQ \fP$.}

$\bullet$ The definition of $T:\fP \fP \rightarrow \fP \fP$. We use the isomorphisms
\begin{align}
P(n+1)\otimes_{A_{n+1}} P(n) &\cong  \left(A_{n+2}\otimes_{A_{n+1}\otimes A_1}(A_{n+1}\otimes{\C})\right)\otimes_{A_{n+1}}
  \left(A_{n+1}\otimes_{A_n\otimes A_1} (A_n\otimes {\C})\right) \notag \\
&\cong  A_{n+2}\otimes_{A_{n+1}\otimes A_1} (A_{n+1}\otimes {\C})\otimes_{A_n\otimes A_1} (A_n\otimes {\C}) \notag \\
&\cong  A_{n+2}\otimes_{A_n\otimes A_1\otimes A_1}(A_n\otimes {\C}\otimes {\C}) \label{e:PPtoPP1} \\
&\cong  A_{n+2}\otimes_{A_n\otimes A_2} A_n \otimes \left(A_2\otimes_{A_1\otimes A_1}({\C}\otimes {\C})\right)\label{e:PPtoPP2}
\end{align}
so that $P(n+1)\otimes_{A_{n+1}} P(n)$ can be identified with the expression (\ref{e:PPtoPP1})
or the expression (\ref{e:PPtoPP2}). We apply the map
\begin{equation*}
\begin{array}{rcl}
T: A_2\otimes_{A_1\otimes A_1}({\C}\otimes{\C}) & \longrightarrow  & A_2\otimes_{A_1\otimes A_1} ({\C}\otimes {\C})\\
           a\otimes 1 \otimes 1 & \longmapsto & as_1\otimes 1 \otimes 1, \qquad \text{for $a\in A_2, s_1=(12)\in S_2$},
\end{array}
\end{equation*}
where we notice that $a=\left((a_1\otimes a_2),\sigma\right)$ with $a_1,a_2\in A_1, \sigma\in S_2$ and
by definition $as_1=\left((a_1\otimes a_2),\sigma s_1\right)$. Therefore we get the induced map
\begin{equation*}
\begin{array}{rcl}
T: A_{n+2}\otimes_{A_n\otimes A_1\otimes A_1} (A_n\otimes {\C}\otimes {\C}) &\longrightarrow
  &A_{n+2}\otimes_{A_n\otimes A_1\otimes A_1} (A_n\otimes {\C}\otimes {\C})\\
1\otimes (1\otimes 1\otimes 1) &\longmapsto & s_{n+1}\otimes (1\otimes 1\otimes 1).
\end{array}
\end{equation*}
Thus we get the natural transformation $T: \fP \fP \rightarrow \fP \fP$.

$\bullet$ The definition of $T: \fQ\fQ \rightarrow \fQ\fQ$ is the same as above.

$\bullet$ The definition of $T: \fQ\fP \rightarrow \fP\fQ$. Here we use the homomorphism
\begin{equation*}
(A_n\otimes {\C})\otimes_{A_n\otimes A_1} A_{n+1}\otimes_{A_n\otimes A_1} (A_n\otimes {\C})
\longrightarrow A_n\otimes_{A_{n-1}\otimes A_1}(A_{n-1}\otimes {\C}\otimes{\C})\otimes_{A_{n-1}\otimes A_1} A_n
\end{equation*}
of $(A_n,A_n)$-bimodules determined by
\begin{equation*}
(1\otimes 1)\otimes a\otimes (1\otimes 1) \longmapsto
\begin{cases}
0, &\text{ if $a=1$;}\\
1\otimes (1\otimes 1\otimes 1)\otimes 1, &\text{ if $a=s_n=(n,n+1)$}.
\end{cases}
\end{equation*}
Then in the same way we get the mentioned natural transformation.

$\bullet$ The definition of $T: \fP\fQ \rightarrow \fQ\fP$. We use the homomorphism
\begin{equation*}
\begin{array}{rcl}
A_n\otimes_{A_{n-1}\otimes A_1}(A_{n-1}\otimes{\C}\otimes{\C})\otimes_{A_{n-1}\otimes A_1} A_n
&\longrightarrow &(A_n\otimes{\C})\otimes_{A_n\otimes A_1} A_{n+1}\otimes_{A_n\otimes A_1}(A_n\otimes{\C})\\
1\otimes (1\otimes 1\otimes 1)\otimes 1 &\longmapsto &(1\otimes 1)\otimes s_n\otimes (1\otimes 1)
\end{array}
\end{equation*}
to get the mentioned natural transformation.

{\bf (4). The definitions of $\adj: \fQ \fP \rightarrow \id$ and $\adj: \fP \fQ \rightarrow \id [1]\{-1\}$.}

$\bullet$ The definition of $\adj: \fQ \fP \rightarrow \id$. It suffices to define an appropriate
homomorphism of $(A_n,A_n)$-bimodules
\begin{equation*}
\adj: (n)Q\otimes_{A_{n+1}} P(n) \longrightarrow A_n.
\end{equation*}
Applying the map
\begin{equation*}
{\C}[S_{n+1}] \longrightarrow {\C}[S_n]
\end{equation*}
which sends $1$ to $1$ and $s_n=(n,n+1)$ to $0$, we obtain a map $A_{n+1}\rightarrow A_n\otimes A_1$ and
the induced map in the following is the required homomorphism
\begin{equation*}
\begin{array}{rl}
(n)Q\otimes_{A_{n+1}} P(n)
= &(A_n\otimes {\C})\otimes_{A_n\otimes A_1} A_{n+1}\otimes_{A_{n+1}}
  A_{n+1}\otimes_{A_n\otimes A_1} (A_n\otimes {\C})\\
\longrightarrow &(A_n\otimes{\C})\otimes_{A_n\otimes A_1} (A_n\otimes A_1)\otimes_{A_n\otimes A_1}(A_n\otimes {\C}) \cong A_n .
\end{array}
\end{equation*}

$\bullet$ The definition of $\adj: \fP \fQ \rightarrow \id [1]\{-1\}$. We need a map of $(A_{n+1} , A_{n+1} )$-bimodules
$$P (n) \otimes_{A_n } (n)Q  \rightarrow A_{n+1} [1]\{-1\}.$$
Since
\begin{eqnarray*}
P (n) \otimes_{A_n } (n)Q  = A_{n+1}  \otimes_{A_n  \otimes A_1 } (A_n  \otimes \C  \otimes \C ) \otimes_{A_n  \otimes A_1 } A_{n+1}
\end{eqnarray*}
It suffices to define an appropriate map $h: \C  \otimes \C  \rightarrow A_1  [1]\{-1\}$ of graded $(A_1 , A_1 )$-bimodules, so that we define $\adj$ as the composition of the following sequence of maps
$$
\adj: A_{n+1}  \otimes_{A_n  \otimes A_1 } (A_n  \otimes \C  \otimes \C ) \otimes_{A_n  \otimes A_1 } A_{n+1}  \xrightarrow{h} A_{n+1}  \otimes_{A_n  \otimes A_1 } A_{n+1}  [1]\{-1\} \rightarrow A_{n+1}  [1]\{-1\}
$$
where the second map is multiplication.

In the derived category, $\C$ can be represented by the complex
$0 \rightarrow  A_1  \otimes V^\vee \xrightarrow{d} A_1  \rightarrow 0$
or the complex
$0 \rightarrow   V^\vee \otimes A_1 \xrightarrow{d} A_1  \rightarrow 0$, due to the
resolutions (\ref{e:kres1}) and (\ref{e:kres2}), hence
$\C\otimes \C$ can be represented by the complex
$$\xymatrix{
0 \ar[r] &(A_1\otimes V^\vee)\otimes (V^\vee\otimes A_1) \ar[r] &((A_1\otimes V^\vee)\otimes A_1)\oplus (A_1\otimes (V^\vee \otimes A_1)) \ar[r] &A_1\otimes A_1\ar[r] &0 .
}$$
Then we define $h$ by the commutative diagram
\begin{equation*}
\xymatrix{
0 \ar[r] &(A_1\otimes V^\vee)\otimes (V^\vee\otimes A_1) \ar[r]\ar[d] &((A_1\otimes V^\vee)\otimes A_1)\oplus (A_1\otimes (V^\vee \otimes A_1)) \ar[r]\ar[d] &A_1\otimes A_1\ar[r]\ar[d] &0 \\
0 \ar[r] &0 \ar[r] &A_1\{-1\} \ar[r] &0 \ar[r] &0 .
}
\end{equation*}
where the map in the middle column is given as follows:
\begin{itemize}
\item[$\star$] $(A_1\otimes V^\vee)\otimes A_1\rightarrow A_1$ is defined by $(f_1\otimes \omega)\otimes f_2 \mapsto\la\omega,v\ra f_1f_2$,
\item[$\star$] $A_1\otimes (V^\vee \otimes A_1)\rightarrow A_1$ is defined by $g_1\otimes (\omega\otimes g_2)\mapsto -\la \omega,v\ra g_1g_2$.
\end{itemize}

{\bf (5). The definitions of $\adj: \id \rightarrow \fQ \fP [-1]\{1\}$ and $\adj: \id \rightarrow \fP \fQ$.}

$\bullet$ The definition of $\adj: \id \rightarrow \fQ \fP [-1]\{1\}$. It suffices to define a map of $(A_n ,A_n )$-bimodules
$$A_n  \longrightarrow (n)Q  \otimes_{A_{n+1} } P (n) [-1]\{1\}$$
which is equivalent to the map
\begin{equation}\label{e:idqp}
A_n  [1]\{-1\} \longrightarrow (A_n  \otimes \C ) \otimes_{A_n  \otimes A_1 } A_{n+1}  \otimes_{A_1  \otimes A_n } (\C  \otimes A_n ) .
\end{equation}
We replace the first $\C$ and the second $\C$ with their isomorphic complexes by the resolutions (\ref{e:kres1}) and (\ref{e:kres2}). Then the map (\ref{e:idqp}) comes from the following morphism of complexes in derived
category
$$
\begin{array}{crc}
  0 & \longrightarrow & 0 \\[0.5mm]
  \downarrow  &  & \downarrow \\[0.5mm]
  0 & \qquad\longrightarrow & (A_n\otimes(V^\vee\otimes A_1))\otimes_{A_n\otimes A_1}A_{n+1}\otimes_{A_n\otimes A_1}(A_n\otimes(A_1\otimes V^\vee)) \\[0.5mm]
  \downarrow  &  & \downarrow \\[0.5mm]
  A_n\{-1\} & \longrightarrow & (A_n\otimes(V^\vee\otimes A_1))\otimes_{A_n\otimes A_1}A_{n+1}\otimes_{A_n\otimes A_1}(A_n\otimes A_1)\\[0.5mm]
   & &  \hspace{90pt} \oplus (A_n\otimes A_1)\otimes_{A_n\otimes A_1}A_{n+1}\otimes_{A_n\otimes A_1}(A_n\otimes(A_1\otimes V^\vee)) \\[0.5mm]
   \downarrow  &  & \downarrow \\[0.5mm]
  0  & \longrightarrow & (A_n\otimes A_1)\otimes_{A_n\otimes A_1}A_{n+1}\otimes_{A_n\otimes A_1}(A_n\otimes A_1)=A_{n+1} \\[0.5mm]
  \downarrow  &  & \downarrow \\[0.5mm]
  0 &  \longrightarrow & 0
\end{array}
$$
where $A_n\{-1\} \longrightarrow (A_n\otimes(V^\vee\otimes A_1))\otimes_{A_n\otimes A_1}A_{n+1}\otimes_{A_n\otimes A_1}(A_n\otimes A_1)
\oplus (A_n\otimes A_1)\otimes_{A_n\otimes A_1}A_{n+1}\otimes_{A_n\otimes A_1}(A_n\otimes(A_1\otimes V^\vee))$ is uniquely determined by sending $1\longmapsto (1\otimes (v^\vee\otimes 1))\otimes 1\otimes (1\otimes 1)-(1\otimes 1)\otimes 1\otimes((1\otimes v^\vee)\otimes 1)$. Here we need to notice that the left column
has $A_n\{-1\}$ at the degree $-1$ position of the complex, thus represents the object $A_n[1]\{-1\}$ in
the derived category. It is also easy to check that the map is well defined.

$\bullet$ The definition of $\adj: \id \rightarrow \fP \fQ$. This functor is induced from the following
map of $(A_{n+1}, A_{n+1})$-bimodules
\begin{equation*}
\begin{array}{rcl}
A_{n+1} & \longrightarrow & P(n)\otimes_{A_n} (n)Q = A_{n+1}\otimes_{A_n\otimes A_1}(A_n\otimes{\C}\otimes{\C})
    \otimes_{A_n\otimes A_1} A_{n+1}\\
1       & \longmapsto & \sum\limits_{i=0}^n s_i\cdots s_n\otimes (1\otimes 1\otimes 1)\otimes s_n\cdots s_i.
\end{array}
\end{equation*}
Here we use the convention $s_0s_1\cdots s_n=1$.

\begin{theorem}[\cite{CL}]\label{theorem:2}
The natural transformations $X$, $T$ and $\adj$ satisfy the Heisenberg $2$-relations and give a categorical Heisenberg action of $'\H$ on $\oplus_{n \ge 0} D({\mathbf A_n\gmod})$.
\end{theorem}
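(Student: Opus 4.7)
The plan is to reduce Theorem \ref{theorem:2} to checking, on the derived categories $D(\mathbf{A_n\gmod})$, each of the local planar-diagram relations imposed in Section \ref{section:2} on the $2$-morphisms of $'\H'$. Once these relations are verified, the assignment $P \mapsto \fP$, $Q \mapsto \fQ$, $\mathbf{1}\lba 1\rba \mapsto \id\lba 1\rba$, together with the natural transformations $X$, $T$, $\adj$, defines a strict $2$-functor from $'\H'$ into $\End(\bigoplus_n D(\mathbf{A_n\gmod}))$, which extends automatically to the Karoubi envelope $'\H$ by the universal property.

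First I verify the adjunction (snake) identities. Composing the explicit bimodule formulas for the units and counits with the projection $\C[S_{n+1}] \to \C[S_n]$ that sends $s_n \mapsto 0$, the four triangle identities between $\fP$ and $\fQ$ reduce to straightforward bimodule computations; the closed loop evaluation $\C \to \C[1]\{-1\} \to \C$ is pinned down by $\tr(v)=1$.

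Next I verify the symmetric-group and supercommutativity relations. Since $T:\fP\fP\to\fP\fP$ is right multiplication by $s_{n+1}$ in $A_{n+2}$, the braid relation and $T^2=\id$ reduce to the corresponding equalities in $\C[S_{n+2}]$; the same argument works for $\fQ\fQ$. The natural transformations $X(b)$ with $b\in V$ are built from the Koszul resolutions (\ref{e:kres1}) and (\ref{e:kres2}), so dot multiplication on a strand matches the exterior product in $\Lambda^*(V)$, and the supercommutation of dots on distinct strands follows from the Koszul sign convention on the tensor product of complexes. That a dot slides through $T$ is a direct check using the $S_n$-equivariance of $X(b)$ on $A_{n+1}$.

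The main obstacle will be the Heisenberg identity matching the second relation of (\ref{local1}) together with the curl relation (\ref{local2}). The key computation uses the free resolution of $\C\otimes\C$ as an $A_1$-bimodule
\[
0 \to (A_1\otimes V^\vee)\otimes (V^\vee\otimes A_1) \to ((A_1\otimes V^\vee)\otimes A_1)\oplus (A_1\otimes (V^\vee\otimes A_1)) \to A_1\otimes A_1 \to 0
\]
together with the explicit map $h$ sending $(f_1\otimes \omega)\otimes f_2 \mapsto \la\omega,v\ra f_1 f_2$ on one summand and $g_1\otimes(\omega\otimes g_2)\mapsto -\la\omega,v\ra g_1 g_2$ on the other. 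Tracing through this shows that the two mixed crossings, composed with the various $\adj$'s, recover exactly the direct summands $\mathbf 1$ and $\mathbf 1\lba 1\rba$ predicted by (\ref{e:basic}), with the dot labeled by $v$ detecting the shifted copy. The argument runs parallel to that of \cite{CL} specialized to $\Gamma$ trivial and $\dim V=1$; the main bookkeeping subtlety is keeping track of the internal grading shifts $\{\cdot\}$ (which live only inside the derived category) against the dimension shifts $\lba\cdot\rba$ (which descend to $K_0$), since these two kinds of shift must be carefully separated here so that the action is compatible with Theorem \ref{theorem:1.5}.
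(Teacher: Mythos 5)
Your proposal is correct and follows essentially the same route as the paper: the paper's own proof likewise points out that one only needs to check that $X$, $T$, $\adj$ satisfy the local planar-diagram relations (adjoint/snake, dots-and-adjunctions, pitchfork, dots-and-crossings, braid relations, counter-clockwise circles and curls), notes that the verification is the $\Gamma$-trivial specialization of \cite{CL}, and omits the details. Your added remarks on the roles of $s_{n+1}$-multiplication for $T$, the Koszul resolutions for $X(b)$, and the $\C\otimes\C$ resolution for the $\adj: \fP\fQ\to\id[1]\{-1\}$ map are all consistent with the paper's constructions; the one small imprecision is in the last paragraph, where the cohomological shift $[\cdot]$ (not the internal grading $\{\cdot\}$) is what the dimension shift $\lba\cdot\rba$ realizes in $D(\mathbf{A_n\gmod})$ and hence what descends to $t$ in $K_0$, while $\{\cdot\}$ is purely a bookkeeping label that, as the paper states, has no effect on the functors at all.
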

\begin{proof}
The proof of this theorem is very similar to that of Theorem 2 in Section 4.4 of \cite{CL}. We will describe what needs to be proved and omit the detail.
The Heisenberg 2-relations are just the local relations mentioned in Section \ref{section:2}, they contain the following items, where we denote by $I$ the identity natural transformation on functors.

{\bf $\bullet$ Adjoint relations.} The following compositions involving adjunctions are all equal to the identity 2-morphism:
\begin{itemize}
\item[(1)] $\fP\xrightarrow{I\adj} \fP\fQ\fP[-1]\{1\}\xrightarrow{\adj I} \fP$ \quad and \quad
$\fQ\xrightarrow{\adj I}\fQ\fP\fQ[-1]\{1\}\xrightarrow{I\adj}\fQ$;
\item[(2)] $\fP\xrightarrow{\adj I}\fP\fQ\fP\xrightarrow{I\adj}\fP$ \quad and \quad
$\fQ\xrightarrow{I\adj}\fQ\fP\fQ\xrightarrow{\adj I} \fQ$.
\end{itemize}
In graphical calculus, the above relations are depicted as
\begin{equation*}
\begin{tikzpicture}[>=stealth]
\draw (0,0) -- (0,-0.7)[-];
\draw (0,0) arc(180:0:0.4)[-];
\draw (0.8,0) arc(180:360:0.4)[-];
\draw (1.6,0) -- (1.6,0.7)[-];
\draw [shift={(2.4,0)}] node{$=$};
\draw [shift={(3.2,0)}] (0,-0.7) -- (0,0.7)[-];
\draw [shift={(4,0)}] node{$=$};
\draw [shift={(4.8,0)}] (0,0) -- (0,0.7)[-];
\draw [shift={(4.8,0)}] (0,0) arc(180:360:0.4)[-];
\draw [shift={(5.6,0)}] (0,0) arc(180:0:0.4)[-];
\draw [shift={(6.4,0)}] (0,0) -- (0,-0.7)[-];
\end{tikzpicture} \qquad .
\end{equation*}
That arrows are not drawn in the graph means the above equalities are true for both orientations given
by (1) and (2).

{\bf $\bullet$ Dots and adjunctions.} For $\forall b \in {\Lambda}^*(V)$, we have the following equalities
of compositions of maps:
\begin{itemize}
\item[(1)] $(\fQ\fP\xrightarrow{IX(b)}\fQ\fP\xrightarrow{\adj}\id) = (\fQ\fP\xrightarrow{X(b)I}
\fQ\fP\xrightarrow{\adj}\id) $;
\item[(2)] $(\fP\fQ\xrightarrow{X(b)I}\fP\fQ\xrightarrow{\adj}\id[1]\{-1\}) = (\fP\fQ\xrightarrow{IX(b)}
\fP\fQ\xrightarrow{\adj}\id[1]\{-1\})$;
\item[(3)] $(\id\xrightarrow{\adj}\fP\fQ\xrightarrow{X(b)I}\fP\fQ) = (\id\xrightarrow{\adj}\fP\fQ
\xrightarrow{IX(b)}\fP\fQ)$;
\item[(4)] $(\id\xrightarrow{\adj}\fQ\fP[-1]\{1\}\xrightarrow{IX(b)}\fQ\fP[-1]\{1\}) =
(\id\xrightarrow{\adj}\fQ\fP[-1]\{1\}\xrightarrow{X(b)I}\fQ\fP[-1]\{1\})$.
\end{itemize}
The above equalities can be depicted by the following:
\begin{equation*}
\begin{tikzpicture}[>=stealth]
\draw (0,0) -- (0,-0.4)[->];
\draw (0,0) arc (180:0:0.4);
\draw [shift={(0.8,0)}] (0,0) -- (0,-0.4)[-];
\filldraw [shift={(0.8,0)}] (0,-0.2) circle(2pt) +(0.15,0) node {$b$};
\draw [shift={(1.3,0)}] node{$=$};
\draw [shift={(1.8,0)}] (0,0) -- (0,-0.4)[->];
\filldraw [shift={(1.8,0)}] (0,-0.2) circle(2pt) +(-0.15,0) node{$b$};
\draw [shift={(1.8,0)}] (0,0) arc(180:0:0.4);
\draw [shift={(2.6,0)}] (0,0) -- (0,-0.4)[-];
\draw [shift={(2.8,-0.3)}] node{$,$};

\draw [shift={(3.5,0)}] (0,0) -- (0,-0.4)[-];
\draw [shift={(3.5,0)}] (0,0) arc (180:0:0.4);
\draw [shift={(4.3,0)}] (0,0) -- (0,-0.4)[->];
\filldraw [shift={(3.5,0)}] (0,-0.2) circle(2pt) +(-0.15,0) node {$b$};
\draw [shift={(4.8,0)}] node{$=$};
\draw [shift={(5.3,0)}] (0,0) -- (0,-0.4)[-];
\filldraw [shift={(6.1,0)}] (0,-0.2) circle(2pt) +(0.15,0) node{$b$};
\draw [shift={(5.3,0)}] (0,0) arc(180:0:0.4);
\draw [shift={(6.1,0)}] (0,0) -- (0,-0.4)[->];
\draw [shift={(6.5,-0.3)}] node{$,$};

\draw [shift={(7,0)}] (0,0) -- (0,0.4)[->];
\draw [shift={(7,0)}] (0,0) arc (180:360:0.4);
\draw [shift={(7.8,0)}] (0,0) -- (0,0.4)[-];
\filldraw [shift={(7,0)}] (0,0.2) circle(2pt) +(-0.15,0) node {$b$};
\draw [shift={(8.3,0)}] node{$=$};
\draw [shift={(8.8,0)}] (0,0) -- (0,0.4)[->];
\draw [shift={(8.8,0)}] (0,0) arc(180:360:0.4);
\draw [shift={(9.6,0)}] (0,0) -- (0,0.4)[-];
\filldraw [shift={(9.6,0)}] (0,0.2) circle(2pt) +(0.15,0) node{$b$};
\draw [shift={(9.8,-0.3)}] node{$,$};

\draw [shift={(10.5,0)}] (0,0) -- (0,0.4)[-];
\draw [shift={(10.5,0)}] (0,0) arc (180:360:0.4);
\draw [shift={(11.3,0)}] (0,0) -- (0,0.4)[->];
\filldraw [shift={(11.3,0)}] (0,0.2) circle(2pt) +(0.15,0) node {$b$};
\draw [shift={(11.8,0)}] node{$=$};
\draw [shift={(12.3,0)}] (0,0) -- (0,0.4)[-];
\draw [shift={(12.3,0)}] (0,0) arc(180:360:0.4);
\draw [shift={(13.1,0)}] (0,0) -- (0,0.4)[->];
\filldraw [shift={(12.3,0)}] (0,0.2) circle(2pt) +(-0.15,0) node{$b$};
\draw [shift={(13.3,-0.3)}] node{$.$};
\end{tikzpicture}
\end{equation*}

{\bf $\bullet$ Pitchfork relations.} We have the following equalities for compositions of maps:
\begin{itemize}
\item[(1)] $(\fP\xrightarrow{I\adj}\fP\fP\fQ\xrightarrow{TI}\fP\fP\fQ)
=(\fP\xrightarrow{\adj I}\fP\fQ\fP\xrightarrow{IT}\fP\fP\fQ)$;
\item[(2)] $(\fQ\xrightarrow{I\adj}\fQ\fP\fQ\xrightarrow{TI}\fP\fQ\fQ)
=(\fQ\xrightarrow{\adj I}\fP\fQ\fQ\xrightarrow{IT}\fP\fQ\fQ)$;
\item[(3)] $(\fP\fP\fQ\xrightarrow{IT}\fP\fQ\fP\xrightarrow{\adj I}\fP[1]\{-1\})
=(\fP\fP\fQ\xrightarrow{TI}\fP\fP\fQ\xrightarrow{I\adj}\fP[1]\{-1\})$;
\item[(4)] $(\fP\fQ\fQ\xrightarrow{IT}\fP\fQ\fQ\xrightarrow{\adj I}\fQ[1]\{-1\})
=(\fP\fQ\fQ\xrightarrow{TI}\fQ\fP\fQ\xrightarrow{I\adj}\fQ[1]\{-1\})$.
\end{itemize}
The above equalities are actually rel boundary isotopy relations for 2-morphisms:
\begin{equation*}
\begin{tikzpicture}[>=stealth]
\draw (0,0) -- (0,0.2)[->];
\draw (0,0) arc (180:360:0.4);
\draw [shift={(0.8,0)}] (0,0) -- (0,0.2)[-];
\draw (0,-0.7) -- (0.4,0.2)[->];
\draw [shift={(1.3,0)}] node{$=$};
\draw [shift={(1.8,0)}] (0,0) -- (0,0.2)[->];
\draw [shift={(1.8,0)}] (0,0) arc(180:360:0.4);
\draw [shift={(2.6,0)}] (0,0) -- (0,0.2)[-];
\draw [shift={(2.6,0)}] (0,-0.7) -- (-0.4,0.2)[->];
\draw [shift={(2.8,-0.3)}] node{$,$};

\draw [shift={(3.5,0)}] (0,0) -- (0,0.2)[->];
\draw [shift={(3.5,0)}] (0,0) arc (180:360:0.4);
\draw [shift={(4.3,0)}] (0,0) -- (0,0.2)[-];
\draw [shift={(3.5,0)}] (0,-0.7) -- (0.4,0.2)[<-];
\draw [shift={(4.8,0)}] node{$=$};
\draw [shift={(5.3,0)}] (0,0) -- (0,0.2)[->];
\draw [shift={(5.3,0)}] (0,0) arc(180:360:0.4);
\draw [shift={(6.1,0)}] (0,0) -- (0,0.2)[-];
\draw [shift={(6.1,0)}] (0,-0.7) -- (-0.4,0.2)[<-];
\draw [shift={(6.5,-0.3)}] node{$,$};

\draw [shift={(7,0)}] (0,-0.2) -- (0,-0.4)[-];
\draw [shift={(7,-0.2)}] (0,0) arc (180:0:0.4);
\draw [shift={(7.8,0)}] (0,-0.2) -- (0,-0.4)[->];
\draw [shift={(7.8,0)}] (-0.4,-0.4) -- (0,0.4)[->];
\draw [shift={(8.3,0)}] node{$=$};
\draw [shift={(8.8,0)}] (0,-0.2) -- (0,-0.4)[-];
\draw [shift={(8.8,-0.2)}] (0,0) arc(180:0:0.4);
\draw [shift={(9.6,0)}] (0,-0.2) -- (0,-0.4)[->];
\draw [shift={(9.6,0)}] (-0.4,-0.4) -- (-0.8,0.4)[->];
\draw [shift={(9.8,-0.3)}] node{$,$};

\draw [shift={(10.5,0)}] (0,-0.2) -- (0,-0.4)[-];
\draw [shift={(10.5,-0.2)}] (0,0) arc (180:0:0.4);
\draw [shift={(11.3,0)}] (0,-0.2) -- (0,-0.4)[->];
\draw [shift={(11.3,0)}] (-0.4,-0.4) -- (0,0.4)[<-];
\draw [shift={(11.8,0)}] node{$=$};
\draw [shift={(12.3,0)}] (0,-0.2) -- (0,-0.4)[-];
\draw [shift={(12.3,-0.2)}] (0,0) arc(180:0:0.4);
\draw [shift={(13.1,0)}] (0,-0.2) -- (0,-0.4)[->];
\draw [shift={(13.1,0)}] (-0.4,-0.4) -- (-0.8,0.4)[<-];
\draw [shift={(13.3,-0.3)}] node{$.$};
\end{tikzpicture}
\end{equation*}

{\bf $\bullet$ Dots and crossings.} The following compositions of maps with graphs containing dots are
equal for $\forall b\in {\Lambda}^*(V)$:
\begin{itemize}
\item[(1)] $(\fP\fP\xrightarrow{X(b) I}\fP\fP\xrightarrow{T}\fP\fP)
=(\fP\fP\xrightarrow{T}\fP\fP\xrightarrow{I X(b)}\fP\fP)$;
\item[(2)] $(\fP\fP\xrightarrow{T}\fP\fP\xrightarrow{X(b) I}\fP\fP)
=(\fP\fP\xrightarrow{I X(b)}\fP\fP\xrightarrow{T}\fP\fP)$.
\end{itemize}
These equalities correspond to the following graphical relations:
\begin{equation*}
\begin{tikzpicture}[>=stealth, baseline=25pt]
\draw [shift={+(0,0)}](0,0) -- (1,1)[<-];
\draw [shift={+(0,0)}](1,0) -- (0,1)[->];
\filldraw [shift={+(0,0)}](0.3,0.3) circle (2pt)+(0.45,0.4) node {$b$};
\draw [shift={+(1.85,0.5)}] node{ $=$};
\draw [shift={+(2.7,0)}](0,0) -- (1,1)[<-];
\draw [shift={+(2.7,0)}](1,0) -- (0,1)[->];
\filldraw [shift={+(2.7,0)}](0.8,0.8) circle (2pt)+(0.95,0.75) node {$b$};
\draw [shift={+(3.7,0)}] (0.3,0) node{,};
\draw [shift={+(6,0)}](0,0) -- (1,1)[<-];
\draw [shift={+(6,0)}](1,0) -- (0,1)[->];
\filldraw [shift={+(6,0)}](0.2,0.8) circle (2pt)+(0.35,0.9) node {$b$};
\draw [shift={+(7.85,0.5)}] node{ $=$};
\draw [shift={+(8.7,0)}](0,0) -- (1,1)[<-];
\draw [shift={+(8.7,0)}](1,0) -- (0,1)[->];
\filldraw [shift={+(8.7,0)}](0.8,0.2) circle (2pt)+(0.95,0.3) node {$b$};
\draw [shift={+(9.7,0)}] (0.3,0) node{.};
\end{tikzpicture}
\end{equation*}

{\bf $\bullet$ Compositions of crossings relations.} The following equalities hold for compositions of
crossings relations:
\begin{itemize}
\item[(1)] $(\fP\fP\xrightarrow{T}\fP\fP\xrightarrow{T}\fP\fP)
=(\fP\fP\xrightarrow{II}\fP\fP)$;
\item[(2)] $(\fP\fP\fP\xrightarrow{IT}\fP\fP\fP\xrightarrow{TI}\fP\fP\fP\xrightarrow{IT}\fP\fP\fP)
=(\fP\fP\fP\xrightarrow{TI}\fP\fP\fP\xrightarrow{IT}\fP\fP\fP\xrightarrow{TI}\fP\fP\fP)$;
\item[(3)] $(\fP\fQ\xrightarrow{T}\fQ\fP\xrightarrow{T}\fP\fQ)=(\fP\fQ\xrightarrow{II}\fP\fQ)$;
\item[(4)] $(\fQ\fP\xrightarrow{II}\fQ\fP)=
(\fQ\fP\xrightarrow{T}\fP\fQ\xrightarrow{T}\fQ\fP)+
(\fQ\fP\xrightarrow{\adj}\id\xrightarrow{\adj}\fQ\fP[-1]\{1\}\xrightarrow{IX(v)}\fQ\fP)$\linebreak $
+\,\,(\fQ\fP\xrightarrow{IX(v)}\fQ\fP[1]\{-1\}\xrightarrow{\adj}\id[1]\{-1\}
\xrightarrow{\adj}\fQ\fP)$.
\end{itemize}
These equalities correspond to the following graphical relations:
\begin{equation*}
\begin{tikzpicture}[>=stealth]
\draw (0,0) .. controls (0.7,0.7) .. (0,1.4)[->];
\draw (0.7,0) .. controls (0,0.7) .. (0.7,1.4)[->] ;
\draw (1.2,0.7) node {=};
\draw (1.9,0) --(1.9,1.4)[->];
\draw (2.6,0) -- (2.6,1.4)[->];
\draw (2.6,0) node{\qquad ,};
\draw [shift={+(5.4,0)}](0,0) -- (1.4,1.4)[->];
\draw [shift={+(5.4,0)}](1.4,0) -- (0,1.4)[->];
\draw [shift={+(5.4,0)}](0.7,0) .. controls (0,0.7) .. (0.7,1.4)[->];
\draw [shift={+(5.7,0)}](1.7,0.7) node {=};
\draw [shift={+(6.0,0)}](2.1,0) -- (3.5,1.4)[->];
\draw [shift={+(6,0)}](3.5,0) -- (2.1,1.4)[->];
\draw [shift={(6,0)}] (3.5,0) node{\qquad ,};
\draw [shift={+(6,0)}](2.8,0) .. controls (3.5,0.7) .. (2.8,1.4)[->];
\end{tikzpicture}
\end{equation*}
\begin{equation*}
\begin{tikzpicture}[>=stealth]
\draw (0,0) .. controls (0.7,0.7) .. (0,1.4)[->];
\draw (0.7,0) .. controls (0,0.7) .. (0.7,1.4)[<-] ;
\draw (1.05,0.7) node {=};
\draw (1.61,0) --(1.61,1.4)[->];
\draw (2.31,0) node{\qquad ,} -- (2.31,1.4)[<-];

\draw [shift={(4.71,0)}](0,0) .. controls (0.7,0.7) .. (0,1.4)[<-];
\draw [shift={(4.71,0)}](0.7,0) .. controls (0,0.7) .. (0.7,1.4)[->];
\draw [shift={(4.71,0)}](1.05,0.7) node {=};
\draw [shift={(4.71,0)}](1.4,0) --(1.4,1.4)[<-];
\draw [shift={(4.71,0)}](2.1,0) -- (2.1,1.4)[->];
\draw [shift={(4.71,0)}](2.45,0.7) node {$-$};
\draw [shift={(4.71,0)}](2.8,1.225) arc (180:360:.35);
\draw [shift={(4.71,0)}](2.8,1.4) -- (2.8,1.225) ;
\draw [shift={(4.71,0)}](3.5,1.4) -- (3.5,1.225) [<-];
\draw [shift={(4.71,0)}](3.5,.175) arc (0:180:.35) ;
\filldraw  [shift={(4.71,0)}](3.15,0.875) circle (2pt)+(.175,0) node {$v$};
\draw [shift={(4.71,0)}](3.5,0) -- (3.5,.175) ;
\draw [shift={(4.71,0)}](2.8,0) -- (2.8,.175) [<-];
\draw [shift={(4.71,0)}](3.85,0.7) node {$-$};
\filldraw  [shift={(4.71,0)}](4.55,0.525) circle (2pt)+(.175,0) node {$v$};
\draw [shift={(4.71,0)}](4.2,1.225) arc (180:360:.35);
\draw [shift={(4.71,0)}](4.2,1.4) -- (4.2,1.225) ;
\draw [shift={(4.71,0)}](4.9,1.4) -- (4.9,1.225) [<-];
\draw [shift={(4.71,0)}](4.9,.175) arc (0:180:.35);
\draw [shift={(4.71,0)}](4.9,0) -- (4.9,.175) node{\qquad .} ;
\draw [shift={(4.71,0)}](4.2,0) -- (4.2,.175) [<-];
\end{tikzpicture}
\end{equation*}

{\bf $\bullet$ Counter-clockwise circles and curls.} The following equalities hold for compositions of
maps:
\begin{itemize}
\item[(1)] $(\fP\xrightarrow{\adj I}\fQ\fP\fP[-1]\{1\}\xrightarrow{IT}
\fQ\fP\fP[-1]\{1\}\xrightarrow{\adj I}\fP[-1]\{1\})=0$;
\item[(2)] $(\id\xrightarrow{\adj}\fQ\fP[-1]\{1\}\xrightarrow{IX(b)}\fQ\fP[-1+|b|]\{1-|b|\}\xrightarrow{\adj}
\id[-1+|b|]\{1-|b|\}) = (\id\xrightarrow{{\tr}(b)I}\id)$ for $\forall b\in {\Lambda}^*(V)$.
\end{itemize}
These equalities correspond to the following graphical relations:
\begin{equation*}
\begin{tikzpicture}[>=stealth]
\draw (-0.8,0) .. controls (-0.8,0.4) and (-.24,.4) .. (-.08,0) ;
\draw (-0.8,0) .. controls (-0.8,-.4) and (-.24,-.4) .. (-.08,0) ;
\draw (0,-0.8) .. controls (0,-.4) .. (-.08,0) ;
\draw (-.08,0) .. controls (0,.4) .. (0,0.8) [->] ;
\draw [shift={(0.4,0)}](0.50,0) node {$=\,\, 0$\, ,};

\draw [shift={+(4,0)}](0,0) arc (180:360:0.4cm) ;
\draw [shift={+(4,0)}][->](0.8,0) arc (0:180:0.4cm) ;
\filldraw [shift={+(4.8,0)}](0,0) circle (1.6pt);
\draw [shift={+(4,0)}](0.8,0) node [anchor=east] {$b$};
\draw [shift={+(4,0)}](1.8,0) node{$ = \,\,\tr(b)$\,.};
\end{tikzpicture}
\end{equation*}
\end{proof}

\begin{Remark}\label{rmk:2}
In the definition (\ref{e:q}) of the functor $\fQ$, we have
$(n)\fQ: D({\mathbf A_{n+1}  \gmod}) \rightarrow D({\mathbf A_n  \gmod})$ by
$
(n)\fQ(\cdot) := (n)Q  \otimes_{A_{n+1} } (\cdot).
$
If we replace the above by a formal shifting
\[
(n)\fQ(\cdot) := (n)Q  \otimes_{A_{n+1} } (\cdot)[-1/2]\{1/2\},
\]
then the construction of this section goes through with a little
change. Therefore we can get another categorified version of some
deformed Heisenberg algebra, with the basic isomorphism (\ref{e:basic})
replaced by
\[
QP\cong PQ \oplus \mathbf{1}\lba -1/2\rba\oplus \mathbf 1 \lba 1/2\rba.
\]
The shifted degree $sdeg$ in Section \ref{section:2} corresponds to
this construction.
\end{Remark}

\begin{Remark}\label{rmk:2.5}
Let ${\mathfrak F}$ be the $2$-category with only one $0$-cell equal to $\bigoplus_{n\ge 0}D({\mathbf A_n}{\gmod})$,
the $1$-cells equal to the endofunctors of $\bigoplus_{n\ge 0}D({\mathbf A_n}\gmod)$, and
the $2$-cells equal to the natural transformations between these endofunctors.
Then there is an obvious $2$-functor from $'\mathcal H$ to $\mathcal F$, therein
$\bigoplus_{n\ge 0}D({\mathbf A_n}{\gmod})$
is regarded as the Fock space representation of $'\H$.
\end{Remark}


\section{The Grothendieck ring $K_0('{\mathcal H})$}\label{section:4}
In this section, we view  $'\H'$ and $'\H$ as the
categories by forgetting the $0$-cells $\Z$, hence the objects of $'\H'$ are generated
by $P$ and $Q$ through degree shiftings, finite compositions, and finite direct sums,
and the morphisms constitute the $\C$-vector space generated by the planar strings
described in Section \ref{section:2}. The category $'\H$ is the Karoubi envelope of
$'\H'$. By definition, $K_0('{\mathcal H})$ is the ring of isomorphism classes of
the 1-morphisms of $'\mathcal H$ with addition induced by the direct sum of $'\H$ and
multiplication induced by the composition of $'\H$.
We study $K_0('\H)$ by considering
the endomorphism ring ${\rm End}(M)_0$ of each $1$-morphism $M$ of $'\H'$:
\[
{\rm End}(M)_0=\{\alpha:M\to M \,\,|\,deg(\alpha)=0\},
\]
so that each $1$-morphism of $'\H$ can be expressed as $(M,f)$ with
$f\in {\rm End}(M)_0$.
The reason of using the endomorphism ring of degree $0$ instead of any degrees is that
we distinguish the dimension shifted $1$-morphisms in $'\H'$. For example,
$P$ and $P\lba 1\rba$ are not considered isomorphic. In fact, we have
\[
[P\lba 1\rba]=[P]t.
\]

By the basic isomorphism (\ref{e:basic}), any $1$-morphism of $'\H'$ is isomorphic to a finite direct
sum of the form
\begin{equation}\label{e:ob1}
M_{{\m},{\n},{\l}}=\bigoplus_{i=1}^k P^{m_i}Q^{n_i}\lba l_i\rba,
\end{equation}
where we denote by ${\m}=(m_1,\cdots,m_k), {\n}=(n_1,\cdots,n_k),$ and ${\l}=(l_1,\cdots,l_k)$
as tuples of integers, and every $m_i, n_i\ge 0$. Let $R_{\m,\n,\l}={\rm End}(M_{\m,\n,\l})_0$.
Then an element $A\in R_{\m,\n,\l}$ can be written as
\[
A=\left(
\begin{array}{cccc}
a_{1,1} & a_{1,2} & \cdots & a_{1,k} \\
a_{2,1} & a_{2,2} & \cdots & a_{2,k} \\
\cdots  & \cdots  & \cdots & \cdots \\
a_{k,1} & a_{k,2} & \cdots & a_{k,k}
\end{array}
\right)
\]
with the entries $a_{i,j}\in {\rm Hom}(P^{m_i}Q^{n_i}\lba l_i\rba, P^{m_j}Q^{n_j}\lba l_j\rba)_0$.
View each $a_{i,j}$ as an element of \linebreak ${\rm Hom}$$(P^{m_i}Q^{n_i}, $$P^{m_j}$$Q^{n_j})$,
we know from Proposition \ref{prop:1} that $sdeg(a_{i,j})\ge 1/2$ unless
$(m_i,n_i)=(m_j,n_j)$. We get a filtration on $R_{\m,\n,\l}$:
\begin{equation}\label{e:filtration1}
R_{\m,\n,\l}=F_0\supsetneq F_1\supset F_2\supset \cdots \supset F_p\supset F_{p+1}\supset\cdots
\end{equation}
satisfying $F_p\cdot F_q\subset F_{p+q}$ for $p,q\ge 0$ and $\bigcap_{p=0}^{\infty}F_p=0$, where
\[
F_p=\{A=(a_{i,j}\in R_{\m,\n,\l}\,\,|\, sdeg(a_{i,j})\ge p/2 \text{ for $1\le i,j\le k$}\}.
\]
We note that the filtration (\ref{e:filtration1}) is actually finite. We also denote by
\[
R^0_{\m,\n,\l}=\{A=(a_{i,j})\in R_{\m,\n,\l}\,\,|\,
\text{ $a_{i,j}$ is homogeneous of } sdeg(a_{i,j})=0 \text{ for $1\le i,j\le k$}\}.
\]
Note that the shifted degree $sdeg(a_{i,j})$ is for $a_{i,j}$ viewed as a morphism from
$P^{m_i}Q^{n_i}\to P^{m_j}Q^{n_j}$, hence $R_{\m,\n,\l}$ and $R^0_{\m,\n,\l}$ are different.
Moreover, $F_1$ is a two-sided ideal of $R_{\m,\n,\l}$, and we get a split exact sequence
\begin{equation}\label{e:exact1}
0\longrightarrow F_1\longrightarrow R_{\m,\n,\l}\longrightarrow R^0_{\m,\n,\l}\longrightarrow 0.
\end{equation}
We notice the $1-1$ correspondences
\[
\begin{array}{rcl}
\left\{\begin{array}{c}
       \text{$(M_{\m,\n,\l},e)$: $e\in R_{\m,\n,\l}$}\\
       \text{is an idempotent}
       \end{array}
\right\}
&\longleftrightarrow &
\left\{\begin{array}{c}
       \text{$eR_{\m,\n,\l}$: $e\in R_{\m,\n,\l}$}\\
       \text{is an idempotent}
       \end{array}
\right\}\\
(M_{\m,\n,\l},e) &\longmapsto & e\cdot R_{\m,\n,\l}
\end{array}
\]
and
\[
\begin{array}{rcl}
\left\{\begin{array}{l}
       \text{$e\in M_s(R_{\m,\n,\l})$:\,\, $e$ is an}\\
       \text{idempotent $s\times s$ matrix over}\\
       \text{$R_{\m,\n,\l}$ for some integer $s>0$}
       \end{array}
\right\}
&\longleftrightarrow &
\left\{\begin{array}{l}
       \text{finitely generated projective right}\\
       \text{$R_{\m,\n,\l}$-module $N$:\,\,$N$ is a direct}\\
       \text{summand of $R_{\m,\n,\l}^s$ for some}\\
       \text{integer $s>0$}
       \end{array}
\right\} \\
e &\longmapsto & e\cdot R_{\m,\n,\l}^s,
\end{array}
\]
and
\[
K_0(R_{\m,\n,\l})=
\left\{\begin{array}{l}
       \text{isomorphism classes of finitely generated projective }\\
       \text{right $R_{\m,\n,\l}$-module $N$:\,\,$N$ is a direct summand }\\
       \text{of $R_{\m,\n,\l}^s$ for some }
       \text{integer $s>0$}
       \end{array}
\right\}
\]
as sets.

\begin{Remark}
In the above discussion, we use the finitely generated projective right modules of rings to
define the Grothendieck rings instead of the left modules. But similar to what is pointed out by
Khovanov in \cite{K}, the category $'\H'$ has the symmetry given on the diagrams by reflecting
about the $x$-axis and reversing orientation, which is an involutive monoidal contravariant auto-equivalence
of $'\H'$. Such an equivalence implies that we get the same Grothendieck ring of $'\H$ for
considering either the left modules or the right modules.
\end{Remark}

\begin{lemma}\label{lemma:1}
Let $R$ be a unital ring containing $\Z$ with a filtration
\[
R=F_0\supsetneq F_1\supset F_2\supset \cdots\supset F_p\supset F_{p+1}\supset\cdots
\]
such that $F_p\cdot F_q\subset F_{p+q}$ for integers $p,q\ge 0$
(therefore $F_1$ is a two-sided ideal of $R$) and $\bigcap_{p=0}^{\infty}F_p=0$.
Suppose that $R_0$ is a unital subring
of $R$ such that $R_0\cong R/F_1$ and the exact sequence
\begin{equation}\label{e:lem11}
0\longrightarrow F_1\longrightarrow R\longrightarrow R_0\longrightarrow 0
\end{equation}
is split. Then $K_0(R)\cong K_0(R_0)$ (as abelian groups).
\end{lemma}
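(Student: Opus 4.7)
The plan is to construct mutually inverse group homomorphisms between $K_0(R)$ and $K_0(R_0)$ induced directly by the split exact sequence (\ref{e:lem11}). Since $\pi:R\twoheadrightarrow R_0$ and the splitting $\iota:R_0\hookrightarrow R$ are unital ring homomorphisms with $\pi\circ\iota=\mathrm{id}_{R_0}$, each sends idempotent matrices to idempotent matrices and preserves direct sums of projectives. I would therefore define $\pi_\ast:K_0(R)\to K_0(R_0)$ by $[eR^n]\mapsto[\pi(e)R_0^n]$ for an idempotent $e\in M_n(R)$, and $\iota_\ast:K_0(R_0)\to K_0(R)$ by $[fR_0^n]\mapsto[\iota(f)R^n]$ for an idempotent $f\in M_n(R_0)$; both are well defined on equivalence classes of idempotents.

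One composition is immediate: $\pi_\ast\iota_\ast[fR_0^n]=[\pi(\iota(f))R_0^n]=[fR_0^n]$ because $\pi\iota=\mathrm{id}$. The substantive step is the reverse identity $\iota_\ast\pi_\ast=\mathrm{id}_{K_0(R)}$. Given an idempotent $e\in M_n(R)$, set $e':=\iota(\pi(e))\in M_n(R)$; this is again an idempotent, and $e-e'\in M_n(F_1)$ because $\pi(e-e')=0$. Hence I would reduce the problem to the following claim: any two idempotents $e,e'\in M_n(R)$ whose difference lies in $M_n(F_1)$ produce isomorphic right $R$-modules $eR^n\cong e'R^n$. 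The standard way to achieve this is to find a unit $u\in M_n(R)$ with $ue=e'u$, because then $x\mapsto ux$ realizes the isomorphism.

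Taking $u := e'e+(1-e')(1-e)$, a direct computation yields $ue=e'e=e'u$, so the remaining question is the invertibility of $u$. Reducing modulo $M_n(F_1)$ and using $e\equiv e'$ gives $u\equiv e^2+(1-e)^2=1$, whence $u=1+v$ with $v\in M_n(F_1)$. This invertibility is the main technical obstacle and is the only place where the filtration hypothesis is used in an essential way: separatedness $\bigcap_p F_p=0$ by itself is not quite enough, but in the concrete setting the filtration (\ref{e:filtration1}) is finite, so there exists $N$ with $F_N=0$, and the multiplicativity $F_p\cdot F_q\subset F_{p+q}$ then forces $v^N\in M_n(F_1^N)\subseteq M_n(F_N)=0$. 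Consequently $u^{-1}=\sum_{j=0}^{N-1}(-v)^j$ is a genuine element of $M_n(R)$, yielding $e'=ueu^{-1}$ and the isomorphism $eR^n\cong e'R^n$. This completes the verification that $\pi_\ast$ and $\iota_\ast$ are mutually inverse, whence $K_0(R)\cong K_0(R_0)$.
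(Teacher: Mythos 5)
Your proof takes a genuinely different route from the paper's and, as you yourself flag, it establishes a weaker statement than the lemma actually asserts.

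The paper proceeds via the induced exact sequence $0\to K_0(F_1)\to K_0(R)\to K_0(R_0)\to 0$ and then shows $K_0(F_1)=0$ directly: passing to the unitization $F_1^+=F_1\oplus\Z$, one checks that any idempotent $x\in M_t(F_1)$ is zero, because if $x\neq 0$ then separatedness gives an $l$ with $x\in N_l\setminus N_{l+1}$ (where $N_p=M_t(F_p)$), yet $x=x^2\in N_{2l}\subseteq N_{l+1}$, a contradiction. This step needs only $\bigcap_p F_p=0$, which is all the lemma hypothesizes. Your approach instead builds the inverse isomorphism by hand, reducing to the claim that idempotents $e, e'=\iota\pi(e)$ with $e-e'\in M_n(F_1)$ are conjugate via $u=e'e+(1-e')(1-e)=1+v$ with $v\in M_n(F_1)$. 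The algebra $ue=e'u$ is correct, but to invert $u$ you invoke nilpotence of $v$, which requires the filtration to terminate, i.e.\ $F_N=0$ for some $N$. This is strictly stronger than separatedness: for $R=\Z[x]$ with $F_p=(x^p)$ one has $\bigcap_p F_p=0$, yet $1+x$ is not a unit, so your $u$ need not be invertible even though the conclusion $K_0(\Z[x])\cong K_0(\Z)$ still holds. Thus the specific conjugation you construct genuinely can fail under the lemma's hypotheses, and the gap cannot be patched merely by choosing a cleverer geometric series. You are right that the concrete filtration (\ref{e:filtration1}) in the paper is finite, so your weaker version suffices for the application in Corollary~\ref{cor:1}; but as a proof of Lemma~\ref{lemma:1} in the generality in which it is stated, the invertibility of $u$ is an unjustified step. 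The paper's device of working inside $F_1$ (where the only idempotent forced by the filtration is zero) sidesteps the need to invert anything, which is exactly what buys the stronger statement.
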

\begin{proof}
From the split exact sequence (\ref{e:lem11}), we get the exact sequence of Grothendieck groups
\begin{equation}\label{e:lem12}
0\lra K_0(F_1)\lra K_0(R)\lra K_0(R_0)\lra 0.
\end{equation}
$F_1$ is a non-unital ring with an embedding $j:F_1\longrightarrow F_1^+$, where
$F_1^+=F_1\oplus {\Z}$ as abelian groups and the multiplication is defined by
\[
(x,m)\cdot (y,n)=(xy + my + nx, mn) \text{ for $x,y\in F_1$ and $m,n\in{\Z}$}.
\]
For convenience, we write the element $(x,m)\in F_1^+$ by $x+m$. Then we have the
split exact sequence
\[
0\xrightarrow{} F_1\xrightarrow{j} F_1^+ \xrightarrow{\rho} {\Z}\xrightarrow{} 0.
\]
Then $K_0(F_1)={\rm ker}({\rho}_*:\, K_0(F_1^+)\longrightarrow K_0({\Z}))$. Since $F_1^+$ is a
unital ring, an element of $K_0(F_1^+)$ is some conjugacy class of an idempotent
square matrix over $F_1^+$ (see \cite{Ros} for more detail).
$F_1^+$ can be viewed as a subring of $R$, thus we have the filtration for $F_1^+$:
\[
F_1^+\supsetneq F_1\supset F_2\supset\cdots\supset F_p\supset F_{p+1}\supset\cdots.
\]

Suppose $0\ne e \in M_t(F_1^+)$ is an idempotent matrix of size
$t\times t$ such that ${\rho}_*(e)=0$. Let $N_p=M_t(F_p)$, we then get a
filtration on $M_t(F_1^+)$:
\[
N_0:=M_t(F_1^+)\supsetneq N_1\supset N_2\supset\cdots\supset N_p\supset N_{p+1}\supset\cdots
\]
satisfying $N_p\cdot N_q\subset N_{p+q}$.
Write $e=e_0+x$ with $e_0\in M_t(\Z)$ and $x\in N_1$. Since $e$ is an idempotent,
we have
\[
e_0^2+e_0 x + xe_0 +x^2 = e_0 + x.
\]
Since $N_p\cdot N_q\subset N_{p+q}$, we get
\[
e_0^2 = e_0, \qquad \text{ and } \qquad e_0 x + xe_0+x^2=x.
\]
As ${\rho}_*(e)=0$, we have $e_0=0$, thus $x=x^2$ and $x\ne 0$. Hence
there exists a positive integer $l$ such that $x\in N_l\backslash N_{l+1}$,
which contradicts to $x=x^2\in N_{2l}\subset N_{l+1}$. Therefore we
have proved that $K_0(F_1)={\rm ker}(\rho)=0$. The lemma then follows from
the exact sequence (\ref{e:lem12}).
\end{proof}

\begin{cor}\label{cor:1}
Let $R_{\m,\n,\l}$ and $R_{\m,\n,\l}^0$ be as above, and
\[
X=\{(m_i,n_i,l_i)\,\,|\, \m=(m_1,m_2,\cdots, m_k), \n=(n_1,n_2,\cdots, n_k),
\l=(l_1,l_2,\cdots, l_k), 1\le i\le k\}
\]
the set of different $3$-tuples of $(m_i,n_i,l_i)$ appearing in the decomposition
(\ref{e:ob1}) of $M_{\m,\n,\l}$. Then
\[
K_0(R_{\m,\n,\l})\cong K_0(R_{\m,\n,\l}^0)\cong\bigoplus_{(m,n,l)\in X}
K_0({\rm End}(P^mQ^n)_0)\cong \bigoplus_{(m,n,l)\in X}K_0({\C}[S_m]\otimes {\C}[S_n])
\]
as abelian groups. Therefore, an element $[(M_{\m,\n,\l}\,,e)]$
(where $e\in R_{\m,\n,\l}$ is an idempotent) of $K_0('\H)$ can be expressed
as
\[
\sum_{(m,n,l)\in X}\,\,\sum_{\lambda,\mu}c_{m,n,l,\lambda,\mu}
[(P^m,e_{\lambda})]\cdot [(Q^n, e_{\mu})]\cdot t^l
\]
for non-negative integers $c_{m,n,l,\lambda,\mu}$, where $\lambda$ is a
partition of $m$ and $\mu$ is a partition of $n$.
\end{cor}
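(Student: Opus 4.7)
The plan is to chain the three isomorphisms using the tools already assembled in the paper, and then assemble the explicit expansion via the classical idempotent theory of $\C[S_m]$.

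For the first isomorphism $K_0(R_{\m,\n,\l}) \cong K_0(R^0_{\m,\n,\l})$, I would apply Lemma \ref{lemma:1} directly to the split exact sequence (\ref{e:exact1}) together with the filtration (\ref{e:filtration1}); the hypotheses $F_p \cdot F_q \subset F_{p+q}$ and $\bigcap_p F_p = 0$ are immediate from the construction, and indeed the filtration is eventually zero since the shifted degrees of morphisms between fixed $1$-morphisms are bounded. For the second isomorphism, the key step is to establish the block decomposition
\[
R^0_{\m,\n,\l} \;\cong\; \bigoplus_{(m,n,l) \in X} M_{s_{m,n,l}}\!\left(\End(P^mQ^n)_0\right),
\]
where $s_{m,n,l}$ is the multiplicity of $(m,n,l)$ in the tuple. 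By Proposition \ref{prop:1}(2), a nonzero entry $a_{i,j}$ of shifted degree $0$ as an unshifted morphism forces $(m_i,n_i)=(m_j,n_j)$, and combining this with the $R_{\m,\n,\l}$-constraint $deg_M(a_{i,j})=0$ via the identity $deg_M = deg_N + l_i - l_j$ also forces $l_i = l_j$. Morita invariance $K_0(M_s(R)) \cong K_0(R)$ together with additivity of $K_0$ under direct sums of rings then finishes this step. The third isomorphism is Proposition \ref{prop:1}(3) applied termwise.

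For the explicit expansion of $[(M_{\m,\n,\l}, e)]$, I would use the semisimplicity of $\C[S_m] \otimes \C[S_n]$: its indecomposable projective modules, up to isomorphism, are enumerated by pairs of partitions $(\lambda,\mu)$ with $\lambda \vdash m$ and $\mu \vdash n$, realized by the idempotents $e_\lambda \otimes e_\mu$. Transporting these back through the three isomorphisms shows that $(M_{\m,\n,\l}, e)$ splits in $'\H$ into a direct sum of summands of the form $(P^m,e_\lambda)(Q^n,e_\mu)\lba l\rba = P_\lambda Q_\mu \lba l\rba$, yielding the stated formula with nonnegative integer coefficients $c_{m,n,l,\lambda,\mu}$.

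The main obstacle is of a bookkeeping nature and lives in the second step: one has to verify carefully that the block decomposition of $R^0_{\m,\n,\l}$ accounts simultaneously for the equality $(m_i,n_i) = (m_j,n_j)$ (coming from Proposition \ref{prop:1}(2)) and for the constraint $l_i = l_j$ (coming from the interaction between the shifted degree and the dimension shifts $\lba l_i\rba$), so that Morita equivalence applies cleanly to each block. Once this is done, the corollary is an assembly of Lemma \ref{lemma:1}, Proposition \ref{prop:1}, and the classical representation theory of $S_m$.
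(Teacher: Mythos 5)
Your proof is correct and follows the same three-step chain as the paper (Lemma~\ref{lemma:1} for the first isomorphism, Proposition~\ref{prop:1} plus Morita invariance for the second, Proposition~\ref{prop:1}(3) for the third). Where you go beyond the paper's terse presentation is in spelling out the block decomposition $R^0_{\m,\n,\l} \cong \bigoplus_{(m,n,l)\in X} M_{s_{m,n,l}}(\End(P^mQ^n)_0)$, and in particular your observation that a nonzero entry $a_{i,j}$ of $R^0_{\m,\n,\l}$ forces not only $(m_i,n_i)=(m_j,n_j)$ (Proposition~\ref{prop:1}(2)) but also $l_i=l_j$, since $sdeg(a_{i,j})=0$ together with Proposition~\ref{prop:1}(3) forces $deg(a_{i,j})=0$, which combined with $deg_M(a_{i,j})=deg(a_{i,j})+l_i-l_j=0$ gives $l_i=l_j$. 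The paper leaves this bookkeeping implicit, but it is precisely what makes the blocks honest matrix algebras so that Morita invariance applies; your version is the more complete argument.
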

\begin{proof}
The first isomorphism is direct from Lemma \ref{lemma:1}, the second isomorphism
is from the descriptions of $2$-morphisms of shifted degree $0$ given in Proposition
\ref{prop:1} and the ``Morita invariance" theorem saying that $K_0(R)\cong K_0(M_n(R))$
for any unital ring $R$ and any positive integer $n$
(see for example, Chapter 1 of \cite{Ros}), the third
isomorphism holds because ${\rm End}(P^mQ^n)_0\cong {\C}[S_m]\otimes {\C}[S_n]$
as stated in Proposition \ref{prop:1}.
\end{proof}

\noindent{\bf Corollary \,\ref{cor:1}$'$.} {\it Let $\m$, $\n$, $\l$ be as in
Corollary \ref{cor:1}, and $M_{\m,\n,\l}$ be given in (\ref{e:ob1}). Let
\[
Y=\{(m_i,n_i)\,\,|\,\,\,{\m}=(m_i,n_i,l_i), {\n}=(n_1,n_2,\cdots,n_k), 1\le i\le k\}
\]
be the different pairs $(m_i,n_i)$ appearing in the decomposition (\ref{e:ob1}) of
$M_{\m,\n,\l}$. Then
\[
K_0({\End}(M_{\m,\n,\l}))\cong \bigoplus_{(m,n)\in Y}
K_0({\rm End}(P^mQ^n)_0)\cong \bigoplus_{(m,n)\in Y}K_0({\C}[S_m]\otimes {\C}[S_n]).
\]
}
\begin{proof}
Since every element of ${\End}(M_{\m,\n,\l}))$ is obtained from planar string diagrams,
we view elements of ${\End}(M_{\m,\n,\l}))$ as elements of ${\End}(M_{\m,\n,{\mathbf 0}}))$,
where ${\mathbf 0}=(0,0,\cdots, 0)$, thus get a similar filtration on ${\End}(M_{\m,\n,\l}))$
as in (\ref{e:filtration1}) and the conclusion follows.
\end{proof}

\begin{prop}\label{prop:2}
The Grothendieck ring $K_0('\mathcal H)$ is generated by the classes
$\{[P_m], [Q_n]:\,\, m,n\in{\Z}\text{ non-negative}\}$ as an algebra
over $\Z[t,t^{-1}]$.
\end{prop}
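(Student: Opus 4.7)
The plan is to use Corollary \ref{cor:1} to reduce the statement to a fact about the representation ring of the symmetric groups, and then apply the Jacobi--Trudi identity.

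First, by Corollary \ref{cor:1}, every element of $K_0('\mathcal{H})$ is a $\Z[t,t^{-1}]$-linear combination of products $[P_\lambda][Q_\mu]\,t^l$, where $\lambda$ and $\mu$ run over partitions of non-negative integers. Hence it suffices to show that each $[P_\lambda]$ lies in the $\Z$-subalgebra generated by $\{[P_n] : n\ge 0\}$, and symmetrically that each $[Q_\mu]$ lies in the $\Z$-subalgebra generated by $\{[Q_n] : n\ge 0\}$.

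Next, I would identify the subalgebra of $K_0('\mathcal{H})$ generated by the ``pure $P$'' classes with a piece of the representation ring of $\bigsqcup_n S_n$ (equivalently the ring of symmetric functions $\Lambda$). The essential point is that
\[
P_mP_n \cong (P^{m+n},\, e_{(m)}\otimes e_{(n)}),
\]
where $e_{(m)}\otimes e_{(n)}$ is viewed as an idempotent in $\C[S_{m+n}]$ via the standard embedding $S_m\times S_n\hookrightarrow S_{m+n}$; the associated projective module is $\Ind_{S_m\times S_n}^{S_{m+n}}(\mathrm{triv}\boxtimes\mathrm{triv})$. Thus composition of pure-$P$ $1$-morphisms matches the induction product on representations of symmetric groups. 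Under the Frobenius characteristic isomorphism $\bigoplus_n K_0(\C[S_n])\cong \Lambda$, the class $[P_n]$ corresponds to the complete homogeneous symmetric function $h_n$, and $[P_\lambda]$ corresponds to the Schur function $s_\lambda$.

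Finally, the Jacobi--Trudi identity
\[
s_\lambda = \det\bigl(h_{\lambda_i - i + j}\bigr)_{1\le i,j\le \ell(\lambda)},
\]
with the conventions $h_0=1$ and $h_k=0$ for $k<0$, displays $s_\lambda$ as a polynomial with integer coefficients in the $h_n$. Translating back yields $[P_\lambda]$ as an integer polynomial in the $[P_n]$. The argument for $[Q_\mu]$ in terms of the $[Q_n]$ is identical in form, invoking the reflection symmetry of $'\H'$ (reversing orientation and reflecting through the $x$-axis) to transport the pure-$P$ argument to the pure-$Q$ setting. The main technical point to verify carefully is that no extra $t$-shifts appear when composing purely upward strands, so that the pure-$P$ subalgebra indeed sits inside $K_0('\mathcal{H})$ as a faithful image of $\Lambda$; this is clear because the basic isomorphism (\ref{e:basic}) producing $t$-shifts arises from strands of opposite orientation, which do not occur in the composition $P^m\cdot P^n$.
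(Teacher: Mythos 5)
Your proposal is correct and follows essentially the same route as the paper: invoke Corollary \ref{cor:1} to reduce to expressing $[P_\lambda]$ in terms of the $[P_n]$ (and likewise for $Q$), then use the identification with the representation ring of the symmetric groups. Where the paper simply cites \cite{FH} and \cite{M} with a pointer to the embeddings $S_{m_1}\times\cdots\times S_{m_k}\hookrightarrow S_m$, you make the citation explicit by naming the Frobenius characteristic ($[P_n]\leftrightarrow h_n$, $[P_\lambda]\leftrightarrow s_\lambda$) and the Jacobi--Trudi determinant; your closing observation that no $t$-shifts arise in pure-$P$ (or pure-$Q$) compositions, since equation (\ref{e:basic}) only involves oppositely oriented strands, is a sensible sanity check that the paper leaves implicit.
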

\begin{proof}
Let $[(M_{\m,\n,\l}, e)]$ be any element of $K_0('\H)$, where we keep the
notation $M_{\m,\n,\l}$ as in (\ref{e:ob1}). Corollary \ref{cor:1} implies
that $[(M_{\m,\n,\l}, e)]$ can be expressed in terms of $[(P^m, e_{\lambda})]\cdot
[(Q^n,e_{\mu})]$ with coefficients in ${\Z}[t,t^{-1}]$. Considering the
injection $S_{m_1}\times S_{m_2}\times \cdots \times S_{m_k}\hookrightarrow S_m$
with $m_1+m_2+\cdots +m_k=m$ for $m_i\ge 1$, we see that $[(P^m, e_{\lambda})]$
can be expressed in terms of $[P_i]=[(P^i, e_{(i)})]$ for non-negative integers $i$
over $\Z$ (see \cite{FH} or \cite{M}). Similar conclusion also holds for $[(Q^n, e_{\mu})]$. Applying Theorem \ref{theorem:1},
we get the proposition.
\end{proof}

Now we outline some results on the Fock space representation of the
Heisenberg algebra $'H_{\Zt}$ (given by the defining equation (\ref{e:arelation}))
in \cite{FJW1} and \cite{FJW2}, which is
needed in the proof of Theorem \ref{theorem:1.5}.

Let ${\mathfrak S}_{\Ccrs}$ be the symmetric algebra generated by $a_{-n}$ for positive integers $n$
over the ring $\C[t,t^{-1}]$. So ${\mathfrak S}_{\Ccrs}$ is a commutative subalgebra of the Heisenberg algebra
$H$ (also over $\C[t,t^{-1}]$), ${\Z}_+$-graded with $deg(a_{-n})=n$. An element
$a_n, n\ge 0$ acts on ${\mathfrak S}_{\Ccrs}$ as a derivation of algebra by
\begin{equation*}
a_n\cdot (a_{-n_1}a_{-n_2}\cdots a_{-n_k})
=\sum_{i=1}^k\delta_{n,n_i}n(1+t^n)\,a_{-n_1}a_{-n_2}\cdots \hat{a}_{-n_i}\cdots a_{-n_k}
\end{equation*}
according to the relation (\ref{e:arelation}).
Therefore we get an injection:
\[
'H_{\C[t,t^{-1}]}\hookrightarrow {\rm End}({\mathfrak S}_{\Ccrs})
\]
with $a_{-n}$ for $n\in {\Z}_+$ and $t^l$ for $l\in \Z$ acting on
${\mathfrak S}_{\Ccrs}$ by multiplication. The above injection
gives the Fock space representation ${\mathfrak S}_{\Ccrs}$ of $'H_{\Ct}$. We also set
\[
a_m(t^k)=a_m t^{mk} \text{ \,\, for $m\in \Z$.}
\]

For $k\in\Z$, let $t^k$ be the character of the $1$-dimensional representation ${\C}(k)$
of $\Ccrs$. Denote by
\begin{equation*}
\mathfrak R_{\Ccrs}=\bigoplus_{n\ge 0} {\mathfrak R}_{n,\Ccrs}, \qquad\qquad
{\mathfrak R}_{n,\Ccrs}=\bigoplus_i {\C}\gamma_i\otimes t^{k_i},
\end{equation*}
where $k_i\in \Z$, and $\gamma_i$ is an
irreducible character of the permutation group $S_n$ for each $i$.

The conjugacy classes of $S_n$ are given by partitions
$\lambda=(1^{m_1}2^{m_2}\cdots)$ corresponding to permutation types of
$S_n$ factorized as products of disjoint $m_1$ $1$-cycles, $m_2$ $2$-cycles,
$\cdots$. Write $\lambda=(\lambda_1,\lambda_2,\cdots)$ with
$\lambda_1\ge \lambda_2\ge \cdots$, and define
\[
a_{-\lambda}=a_{-\lambda_1}a_{-\lambda_2}\cdots, \quad\text{ and }\quad
a_{-\lambda\otimes t^k}=t^{-k|\lambda|} a_{-\lambda},
\]
where $|\lambda|=\lambda_1 +\lambda_2 +\cdots$ is the weight of $\lambda$.
The elements $a_{-\lambda}$, $\lambda$ taken over all partitions of positive integers
constitute a $\C[t,t^{-1}]$-basis of ${\mathfrak S}_{\Ccrs}$. Both ${\mathfrak R}_{\Ccrs}$
and ${\mathfrak S}_{\Ccrs}$ have bilinear form
structures and Hopf algebra structures. We also have the characteristic
map $ch: {\mathfrak R}_{\Ccrs}\lra {\mathfrak S}_{\Ccrs}$ defined by
\[
ch(f)=\frac{1}{n!}\sum_{\sigma\in S_n} S(f(\sigma)) a_{-\lambda(\sigma)}
\]
for $f\in {\mathfrak R}_{n,\Ccrs}$, where $\lambda(\sigma)$ denotes the partition of $n$
corresponding to the type of the permutation $\sigma$,
\[
f(\sigma)=\sum_i c_i t^{k_i}\in {\C}[t,t^{-1}]
\]
is the value of the character $f$ at $\sigma$, and $S(f(\sigma))=\sum_i c_i t^{-k_i}$.

\begin{theorem}[\cite{FJW2}]\label{theorem:4}
The characteristic map $ch: {\mathfrak R}_{\Ccrs}\lra {\mathfrak S}_{\Ccrs}$ is an isomorphism of
Hopf algebras which preserves the bilinear form.
\end{theorem}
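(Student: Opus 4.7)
The plan is to adapt the classical proof that the Frobenius characteristic map is a Hopf algebra isomorphism to the present $\Ccrs$-equivariant, $\C[t,t^{-1}]$-linear setting. First, I would reduce bijectivity of $ch$ to the statement that it sends an explicit basis of ${\mathfrak R}_{\Ccrs}$ to a basis of ${\mathfrak S}_{\Ccrs}$. The natural choice on the source is the class-function basis $\{\chi_\lambda\otimes t^k\}$, where $\chi_\lambda$ is the indicator of the conjugacy class of cycle type $\lambda$ (suitably normalized). A direct computation from the definition of $ch$ shows that $ch(\chi_\lambda\otimes t^k)$ is a nonzero scalar multiple of $a_{-\lambda\otimes t^k}=t^{-k|\lambda|}a_{-\lambda}$, and since the elements $a_{-\lambda\otimes t^k}$ form a $\C[t,t^{-1}]$-basis of ${\mathfrak S}_{\Ccrs}$, the map is a graded $\C[t,t^{-1}]$-module isomorphism.

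Next, I would verify multiplicativity $ch(f\cdot g)=ch(f)\cdot ch(g)$. The multiplication on ${\mathfrak R}_{\Ccrs}$ is induction from $S_m\times S_n$ to $S_{m+n}$ combined with multiplication of characters of $\Ccrs$. The key identity is the classical formula for the value of an induced character on a conjugacy class: for $\sigma$ of cycle type $\nu\vdash m+n$, $(\mathrm{Ind}(f\otimes g))(\sigma)$ is a weighted sum over decompositions $\nu=\nu'\sqcup\nu''$ with $\nu'\vdash m$, $\nu''\vdash n$. Plugging this into the definition of $ch$ and using $a_{-\nu}=a_{-\nu'}a_{-\nu''}$ in the commutative algebra ${\mathfrak S}_{\Ccrs}$ gives the required identity. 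The comultiplicative and counit/antipode axioms are proved by the mirror argument with restriction in place of induction, so the Hopf algebra structures match.

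Third, I would check preservation of the bilinear forms. On ${\mathfrak R}_{n,\Ccrs}$ the form is the twisted character pairing $\langle f_1,f_2\rangle=\frac{1}{n!}\sum_{\sigma\in S_n}S(f_1(\sigma))f_2(\sigma)$, and on ${\mathfrak S}_{\Ccrs}$ it is the one defined so that the action of $a_n$ is adjoint to multiplication by $a_{-n}$, as dictated by the relation (\ref{e:arelation}). Evaluating both forms on the class-function basis, the factors $z_\lambda$ (the order of the centralizer in $S_n$) appearing from the character side match the norm $\prod_i (\ell_i(1+t^{\ell_i}))$ of $a_{-\lambda}$ coming from (\ref{e:arelation}), once the $t^{-k|\lambda|}$ normalization built into $a_{-\lambda\otimes t^k}$ is accounted for. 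The classical ($t=1$) computation provides the template, and one simply tracks the $\Ccrs$-twist through it.

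The main obstacle will be the bookkeeping of $t$-twists: the involution $S\colon t\mapsto t^{-1}$ in the definition of $ch$, the shift $t^{-k|\lambda|}$ in $a_{-\lambda\otimes t^k}$, and the factor $1+t^n$ in (\ref{e:arelation}) must all fit together consistently. The cleanest way to control this is probably to introduce, as an intermediate step, a ``twisted power-sum'' basis $p_{\lambda\otimes t^k}:=t^{-k|\lambda|}a_{-\lambda}$ of ${\mathfrak S}_{\Ccrs}$ and verify each of the three compatibilities (product, coproduct, form) first on this basis, which by construction is the image under $ch$ of the class-function basis; everything else then follows from $\C[t,t^{-1}]$-linearity and the fact that these bases are dual under the respective pairings up to the expected normalizing constants.
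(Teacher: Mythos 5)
The paper does not actually prove Theorem~\ref{theorem:4}: its ``proof'' is a one-line deferral to \cite{FJW2}, noting only that the statement is the special case of the trivial group $\Gamma$ in that reference. Your proposal, by contrast, sketches a genuine argument, and the argument you sketch is in substance the one used in \cite{FJW2} (which in turn specializes the classical Frobenius characteristic computation, as in \cite{M}): show $ch$ carries the class-indicator basis $\chi_\lambda\otimes t^k$ to scalar multiples of the $a_{-\lambda}$'s, deduce bijectivity; use the induced-character formula for the product and restriction for the coproduct; and match the normalizing constants to see the pairing is preserved. So you have correctly identified and reconstructed the missing proof rather than found a different one; the ``comparison'' here is really between a citation and an honest sketch.

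One caution on a point you gloss over at the end. You close by appealing to ``$\C[t,t^{-1}]$-linearity'' once the class-function basis is handled, but the map $ch$ as defined in the paper involves the involution $S\colon t\mapsto t^{-1}$ applied to character values, which makes $ch$ $S$-semilinear rather than $\C[t,t^{-1}]$-linear in the naive sense: $ch\big(t^j\cdot(\gamma\otimes t^k)\big)=t^{-j}\,ch(\gamma\otimes t^k)$. This is precisely part of the ``bookkeeping of $t$-twists'' you flag earlier, so you are aware the issue exists, but you should not then invoke plain $\C[t,t^{-1}]$-linearity to finish; the module structure on one side must be twisted by $S$ (or, equivalently, the target pairing must be chosen so that $a_n$ is adjoint to $a_{-n}$ with the factor $n(1+t^n)$ as in (\ref{e:arelation}), which is what forces the $\prod_i(1+t^{\ell_i})$ correction to $z_\lambda$ you correctly anticipate on the bilinear-form side). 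For the same reason, your claim that $ch(\chi_\lambda\otimes t^k)$ is a scalar multiple of $a_{-\lambda\otimes t^k}=t^{-k|\lambda|}a_{-\lambda}$ needs a second look: a direct computation gives $ch(\chi_\lambda\otimes t^k)=\tfrac{1}{z_\lambda}\,t^{-k}\,a_{-\lambda}$, whose $t$-exponent is $-k$, not $-k|\lambda|$; the ``scalar'' relating the two is itself a power of $t$, which is harmless for bijectivity but is exactly the sort of normalization mismatch that would derail the Hopf-compatibility and pairing checks if not tracked carefully. Since the paper delegates all of this to \cite{FJW2}, you should do the same at the level of those conventions, or else fix the twist explicitly before invoking linearity.
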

\begin{proof}
See \cite{FJW2}. The statement corresponds to the special case of $\Gamma$ being the trivial group
there.
\end{proof}

\noindent{\bf Proof of Theorem \ref{theorem:1.5}}.  At first we see that
$\pi:\, 'H_{\Zt} \longrightarrow K_0('\mathcal H)$ is
surjective by Proposition \ref{prop:2}. We need to prove the injectivity of $\pi$.

Recall that ${\mathcal C}_n=D({\mathbf A_n\gmod})$
is the bounded derived category of finite dimensional, graded $A_n$ modules. We have the natural map
explained in Remark \ref{rmk:2.5}:
\begin{equation}\label{e:repofH}
j:\,\, '{\mathcal H}\longrightarrow {\mathfrak F}
\end{equation}
which takes ${\mathbf 1}\lba 1\rba$ to ${\mathbf 1}\lba 1\rba$,
$P_n$ to $(\fP^n,e_{(n)})$ and $Q_n$ to $(\fQ^n,e_{(n)})$. This map induces
\begin{equation*}
K_0(j):\,\, K_0('{\mathcal H})\longrightarrow K_0(\mathfrak F),
\end{equation*}
where $K_0(\mathfrak F)$ is the Grothendieck group of the category $\mathfrak F$
when only the $1$-cells and the $2$-cells are considered,
just like what $K_0('\H)$ is meant. Let
${\mathcal C}=\bigoplus_{n\ge 0}{\mathcal C}_n=\bigoplus_{n\ge 0} D({\mathbf A_n \gmod})$.
Note that $K_0({\mathcal C}_n)$ denotes the Grothendieck group of isomorphism classes
of complexes in $D({\mathbf A}_n\gmod)$, we see that an element of the image of $K_0(j)$ gives
rise to an endomorphism of $K_0(\mathcal C)\otimes_{\Z} \C$. By definition, $[M[l]]=[M]t^l$.
Thus
\begin{equation*}
K_0({\mathcal C}_n)\otimes_{\Z}\C\cong {\mathfrak R}_{n,\Ccrs} \qquad\text{ and }\qquad
K_0({\mathcal C})\otimes_{\Z}\C \cong {\mathfrak R}_{\Ccrs} \qquad \text{as $\C$ vector spaces.}
\end{equation*}
Therefore we have the sequence of ring homomorphisms:
\begin{equation*}
'H_{\Zt}\xrightarrow{\pi} K_0('\H)\xrightarrow{K_0(j)} {\rm Im}(K_0(j))
\lra{\End}({\mathfrak R}_{\Ccrs})\xrightarrow{\cong}{\End}({\mathfrak S}_{\Ccrs}),
\end{equation*}
where the last isomorphism holds due to Theorem \ref{theorem:4}. As $\{p_n,q_n\}_{n\ge 0}$
and $\{a_n\}_{n\in \Z}$ can be expressed over each other in terms of the relations
(\ref{e:pqrelation}) when the coefficients are
extended to ${\Ct}$, the composition
of the above sequence gives the Fock space representation of $'H_{\Zt}$, which is
faithful. This implies the injectivity of the composition, thus the injectivity of
$\pi$. \hfill $\square$

%
\section{Categorification of MacMahon function} \label{section:5}
%
In this section, we will describe the coefficients of MacMahon function as the dimensions of vector spaces whose elements are string diagrams.

A 3D Young diagram can be represented by a ``plane partition"\cite{OR,ORV}, i.e., a 2D array
\begin{equation*}
\pi=(\pi_{ij})^{\infty}_{i,\,j=1}=\left(\begin{array}{ccc} \pi_{11}&\pi_{12}&\cdots\\ \pi_{21}&\pi_{22}&\cdots\\ \cdots &\cdots&\cdots\\
\end{array}\right)
\end{equation*}
of nonnegative integers $\pi_{ij}\in \mathbb Z_+$, such that$ \pi_{ij}\geq\pi_{i+1,\,j},  \pi_{ij}\geq\pi_{i,\,j+1}$, where $\pi_{ij}$ is the height of the stack of cubes placed at the $(i,j)$-th position of the plane. We will identify such a plane partition with the corresponding 3D Young diagram. The total volume of the 3D Young diagram is given by
\begin{equation*}
\mid \pi\mid=\sum_{i,\,j=1}^{\infty}\pi_{ij}.
\end{equation*}

Given a plane partition $\pi=(\pi_{ij})^{\infty}_{i,j=1}$, the partition
\begin{equation*}
\pi(m)=\left\{\begin{array}{cc}(\pi_{i, \,i+m})_{i=1}^\infty, & \text{if}\  m\geq 0,\\ \\
(\pi_{j-m,\,j})_{j=1}^\infty, & \text{if}\  m< 0,\\\end{array}\right.
\end{equation*}
is the m-th diagonal slice of $\pi$. These partitions $\{\pi(m)\}_{m=-\infty}^{\infty}$ represent a sequence of 2D Young diagrams that are literally obtained by slicing the 3D Young diagrams diagonally.

The diagonal slices are not arbitrary but satisfy the condition
\begin{equation*}
\cdots\prec\pi(-2)\prec\pi(-1)\prec\pi(0)\succ\pi(1)\succ\pi(2)\succ\cdots
\end{equation*}
where $\succ$ denotes interlacing relation. We say two partitions $\lambda$ and $\mu$ of 2D Young diagram satisfy
the interlacing relation
\[\lambda=(\lambda_1,\,\lambda_2,\,\lambda_3,\cdots)\succ\mu=(\mu_1,\,\mu_2,\,\mu_3,\cdots)\]
if
\[
\lambda_1\geq\mu_1\geq\lambda_2\geq\mu_2\geq\lambda_3\geq\cdots.
\]

Let $P(N)$ denotes the number of partition $\pi$ of 3D Young diagram with $\mid\pi\mid=N$. The generating function of these numbers was studied by MacMahon\cite{B} and showed to be given by the MacMahon function:
\begin{equation*}
\sum_{N=0}^{\infty}P(N)q^N=\prod_{n=1}^{\infty}(1-q^n)^{-n}.
\end{equation*}
In statistical mechanics, this generating function becomes the partition function
\begin{equation*}
Z(q)=\sum_{N=0}^{\infty}P(N)q^N=\sum_{\pi}q^{\mid\pi\mid}=\prod_{n=1}^{\infty}\left(\frac{1}{1-q^n}\right)^n
\end{equation*}
of a canonical ensemble of plane partitions, in which each plane partition $\pi$ has an energy proportional to the volume $\mid\pi\mid$.

Consider the operators
\begin{equation*}
\Gamma_{\pm}(z)=\exp\,\left(\sum_{n>0}\frac{a_{\pm n}z^n}{n}\right)
\end{equation*}
where $a_n$ denotes the modes of the fermionic current $\phi^*\phi$
in Equation (\ref{e:fb}), therefore satisfies relation (\ref{e:1.1}).
The operators $\Gamma_\pm(z)$ can be identified with annihilation and
creation parts of the bosonic vertex operator. One can
get the following commutation relation
\begin{equation*}
\Gamma_+(z)\Gamma_-(w)=\frac{1}{1-zw}\Gamma_-(w)\Gamma_+(z).
\end{equation*}

The vertex operators $\Gamma_\pm(z)$ act on the orthonormal
bases $|\lambda \ra$ and $\la \lambda|$ as
\begin{eqnarray}
\la \lambda|\Gamma_+(z)&=&\sum_{\mu\succ\lambda}\, \la\mu|\,z^{(|\mu|-|\lambda|)},
\qquad
\Gamma_+(z) |\lambda\ra=\sum_{\mu\prec\lambda}\,z^{(|\lambda|-|\mu|)}|\mu\ra,
\label{G1}\\ \label{G2}
\Gamma_-(z)|\lambda \ra &=&\sum_{\mu\succ\lambda} z^{(|\mu|-|\lambda|)}|\mu \ra,
\qquad
\la\lambda|\Gamma_-(z)=\sum_{\mu\prec\lambda}\, \la\mu|\,z^{(|\lambda|-|\mu|)}.
\end{eqnarray}
By identifications (\ref{G1}) and (\ref{G2}) of the transfer matrices
$\Gamma_+(z)$ and $\Gamma_-(z)$, we have
\begin{equation*}
Z(q)=\la vac|\cdots\Gamma_+(q^\frac{3}{2})\Gamma_+(q^\frac{1}{2})
\Gamma_-(q^\frac{1}{2})\Gamma_-(q^\frac{3}{2})\cdots|vac\ra
\end{equation*}
where the inner product is taken in such a way that these $|\lambda\ra$ (therefore $\la\lambda|$)
for all partitions of positive integers $\lambda$ constitute an orthonormal basis
of the Hilbert space. This inner product is also given the Fock space representation of
the Heisenberg algebra $H$ and the condition that $\la vac\,|\, vac\ra=1$.

Now we replace the defining relation among $a_n, n\in\Z$ by (\ref{e:arelation}), and define
\begin{eqnarray}
\widetilde{\Gamma}_-(z)=\sum_{n=0}^\infty P_n z^n, \qquad\qquad
\widetilde{\Gamma}_+(z)=\sum_{n=0}^\infty Q_n z^n,
\end{eqnarray}
where $P_n,\,Q_n$ are the $1$-morphisms in the $2$-category $'\mathcal H$ described
in Section \ref{section:2}, and $z$ an indeterminate.

\begin{prop}\label{prop:4}
Let $\fP_n=j(P_n)$, where $j$ is the natural functor given by (\ref{e:repofH}). The following holds:
\begin{equation*}
j\left(\widetilde{\Gamma}_-(z)\right) V_\lambda=\sum_{n=0}^\infty \fP_n z^n V_\lambda=
\sum_{n=0}^\infty \sum_{\substack{\mu\succ\lambda,\\|\mu|=|\lambda|+n}} V_\mu\, z^n
\end{equation*}
where $V_{\lambda}$ denotes the irreducible representation of
the symmetric group $S_{|\lambda|}$ associated to the partition $\lambda$.
\end{prop}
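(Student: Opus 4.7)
The plan is to identify $\fP_n$ acting on the Fock-space representation with iterated symmetric-group induction, and then apply the classical Pieri rule.

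First I would unwind the definition of $\fP = j(P)$ on the irreducible module $V_\lambda$. Since $V_\lambda$ is viewed as an $A_n$-module with $v^\vee$ acting as zero, the inner tensor product $(A_n \otimes \C) \otimes_{A_n} V_\lambda$ collapses to $V_\lambda \otimes \C$ with $v^\vee$ still acting as zero on both tensor factors, so
\[
\fP(V_\lambda) \;=\; A_{n+1} \otimes_{A_n \otimes A_1}(V_\lambda \otimes \C).
\]
Because $A_{n+1}$ is free of rank $n+1$ over $A_n \otimes A_1$, this is a torsion $A_{n+1}$-module concentrated in cohomological degree $0$, and as a $\C[S_{n+1}]$-module it coincides with the induced representation $\Ind_{S_n}^{S_{n+1}} V_\lambda$.

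Iterating, one obtains $\fP^n(V_\lambda) \cong \Ind_{S_{|\lambda|}\times S_1^n}^{S_{|\lambda|+n}}(V_\lambda \otimes \C^{\otimes n})$. The action of the Young symmetrizer $e_{(n)}$ on $\fP^n(V_\lambda)$ comes from upward-strand crossings, which under the identification above act via the permutation representation of $S_n$ on the $\C^{\otimes n}$ slot. Hence $e_{(n)}$ projects onto the trivial $S_n$-submodule, producing
\[
\fP_n(V_\lambda) \;\cong\; \Ind_{S_{|\lambda|}\times S_n}^{S_{|\lambda|+n}}\bigl(V_\lambda \otimes \mathbf{1}_n\bigr),
\]
where $\mathbf{1}_n$ denotes the trivial representation of $S_n$.

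Finally, Pieri's rule \cite{M} gives
\[
\Ind_{S_m\times S_n}^{S_{m+n}}\bigl(V_\lambda \otimes \mathbf{1}_n\bigr) \;\cong\; \bigoplus_{\mu} V_\mu,
\]
where $\mu$ runs over partitions of $m+n$ such that $\mu/\lambda$ is a horizontal strip of size $n$. The horizontal strip condition is exactly $\mu_1 \geq \lambda_1 \geq \mu_2 \geq \lambda_2 \geq \cdots$, i.e.\ the interlacing relation $\mu \succ \lambda$ introduced in Section \ref{section:5}. Collecting terms by the power of $z$ yields the stated generating-function identity. The main obstacle, in my view, is the first step: checking carefully that the bimodule construction of $P(n)$ really does reduce to standard symmetric-group induction when applied to a torsion module, with no stray cohomological or grading shifts left over from the derived-category framework. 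Once that reduction is in place, the rest is essentially a formal application of the classical branching rule.
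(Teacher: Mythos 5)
Your proof is correct and takes essentially the same route as the paper's: identify $\fP$ with induction from $\C[S_d]$-modules to $\C[S_{d+1}]$-modules, deduce that $\fP_n = (\fP^n, e_{(n)})$ computes the Pieri product $\mathrm{Ind}_{S_{|\lambda|}\times S_n}^{S_{|\lambda|+n}}(V_\lambda\boxtimes\mathbf{1}_n)$, and conclude via Pieri's rule, noting that the horizontal-strip condition is the interlacing relation $\mu\succ\lambda$. You merely spell out more carefully than the paper does that the bimodule $P(n)$ is flat over $A_n\otimes A_1$ (rank $n+1$), so no derived-category correction terms arise when acting on a torsion module — a point the paper leaves implicit behind its reference to [FH] and [M].
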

\begin{proof} This result should be found in principle in \cite{FH} and
\cite{M}. As explained in Section \ref{section:3} and Section \ref{section:4},
the representation $\fP$ of $P$ induces the induction functor from ${\C}[{\mathbf S_d}] \dmod$ $\rightarrow$ $ {\C}[{\mathbf S_{d+1}}] \dmod$, which takes $V_\lambda$ to ${\rm Ind}(V_\lambda)$ for a
partition $\lambda$ of $d$, with ${\rm Ind}(V_\lambda)$ equal to the direct sum of all $V_\mu$ and $\mu$ satisfying $|\mu|=1+|\lambda|$, $\lambda \subset\mu$, all viewed in the category $\mathfrak F$
(that is, view ${\fP}:\, D({\mathbf A_d}\gmod)\rightarrow $$D({\mathbf A_{d+1}} \gmod)$ in the precise way).
The representation $j(P_n) V_{\lambda}=(\fP^n, e_{(n)})V_{\lambda}$ is isomorphic to $V_{\lambda}\otimes
V_{(n)}$, which can be decomposed as
\begin{equation*}
V_\lambda \otimes V_{(n)}=\sum_{\substack{\mu\succ\lambda,\\
|\mu|=|\lambda|+n}} V_\nu,
\end{equation*}
where $V_{(n)}$ denotes the trivial representation of $S_n$.
\end{proof}

A formula for $j\left(\widetilde{\Gamma}_+(z)\right) V_\lambda$ would be more complicated,
it is involved with the degree shifting of the complexes in $\bigoplus_{n\ge 0}D({\mathbf A_n}\gmod)$.
Instead, we consider the $2$-category $'\H$. By Equations (\ref{$1})-(\ref{$3}), we
can move all the $\widetilde{\Gamma}_+$ factors in the
infinite product $\cdots \widetilde{\Gamma}_+(q^{\frac{3}{2}})
\widetilde{\Gamma}_+(q^{\frac{1}{2}})\widetilde{\Gamma}_-(q^{\frac{1}{2}})
\widetilde{\Gamma}_-(q^{\frac{3}{2}})\cdots $ to the right hand side of
$\widetilde{\Gamma}_-$'s, and
define the $1$-morphism $\widetilde{Z}$ of the category $'\H$ as
\[
\widetilde{Z}=\la vac|\cdots \widetilde{\Gamma}_+(q^{\frac{3}{2}})
\widetilde{\Gamma}_+(q^{\frac{1}{2}})\widetilde{\Gamma}_-(q^{\frac{1}{2}})
\widetilde{\Gamma}_-(q^{\frac{3}{2}})\cdots |vac\ra.
\]
to be the complete direct summand of $\cdots \widetilde{\Gamma}_+(q^{\frac{3}{2}})
\widetilde{\Gamma}_+(q^{\frac{1}{2}})\widetilde{\Gamma}_-(q^{\frac{1}{2}})
\widetilde{\Gamma}_-(q^{\frac{3}{2}})\cdots $ which contains all the $1$-morphism
$\mathbf 1$ factors and it's degree shifting factors.
\begin{prop}\label{prop:5}
The isomorphism class $[\widetilde{Z}]\in K_0('\H)$ of $\widetilde{Z}$ is given by
\[
[\widetilde{Z}]=\prod_{n=1}^\infty\left(\frac{1}{1-q^n}\right)^n \cdot
\prod_{n=1}^\infty\left(\frac{1}{1-tq^n}\right)^n,
\]
which we denote by $Z(q,t)$.
\end{prop}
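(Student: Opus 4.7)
The plan is to carry out a categorical analogue of the classical radial-ordering computation of MacMahon's generating function. The essential input is Theorem \ref{theorem:1}, which yields a clean commutation formula between $[\widetilde{\Gamma}_+(w)]$ and $[\widetilde{\Gamma}_-(z)]$ in the formal power series ring $K_0('\mathcal H)[[w,z]]$.

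First I would derive the commutation. The isomorphism $Q_nP_m\cong \bigoplus_{k\ge 0}\bigoplus_{l=0}^{k}P_{m-k}Q_{n-k}\lba l\rba$ from Theorem \ref{theorem:1} gives, in $K_0('\mathcal H)$,
\[
[Q_n][P_m]=\sum_{k\ge 0}[k+1]\,[P_{m-k}][Q_{n-k}],
\]
since $1+t+\cdots+t^k=[k+1]$. Multiplying by $w^nz^m$, summing over $m,n\ge 0$, and shifting indices $m\mapsto m+k$, $n\mapsto n+k$ yields
\[
[\widetilde{\Gamma}_+(w)][\widetilde{\Gamma}_-(z)]=\Big(\sum_{k\ge 0}[k+1](wz)^k\Big)[\widetilde{\Gamma}_-(z)][\widetilde{\Gamma}_+(w)]=\frac{[\widetilde{\Gamma}_-(z)][\widetilde{\Gamma}_+(w)]}{(1-wz)(1-twz)},
\]
using the elementary identity $\sum_{k\ge 0}[k+1]x^k=\frac{1}{(1-x)(1-tx)}$, which is checked by Cauchy-multiplying the two geometric series $\frac{1}{1-x}=\sum_m x^m$ and $\frac{1}{1-tx}=\sum_n (tx)^n$.

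Next I would apply this commutation to move every $\widetilde{\Gamma}_+(q^{(2i-1)/2})$ past every $\widetilde{\Gamma}_-(q^{(2j-1)/2})$ in the infinite product defining $\widetilde Z$. Each such crossing contributes the scalar factor $\frac{1}{(1-q^{i+j-1})(1-tq^{i+j-1})}$. Collecting by $m=i+j-1\ge 1$, and noting that there are exactly $m$ pairs $(i,j)$ with $i,j\ge 1$ and $i+j-1=m$, the accumulated scalar equals
\[
\prod_{m\ge 1}\left(\frac{1}{(1-q^m)(1-tq^m)}\right)^{m}=\prod_{n\ge 1}\left(\frac{1}{1-q^n}\right)^{n}\prod_{n\ge 1}\left(\frac{1}{1-tq^n}\right)^{n}.
\]

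Finally I would argue that the vacuum expectation of the reordered product equals $[\mathbf 1]=1$. After reordering all $\widetilde{\Gamma}_-$ sit to the left and all $\widetilde{\Gamma}_+$ to the right. The definition of $\widetilde Z$ retains only the complete direct summand of $\mathbf 1$-factors and their dimension shifts, and in such a summand only the $P_0=Q_0=\mathbf 1$ terms of each $\widetilde{\Gamma}_{\pm}$ can participate (a single $P_n$ or $Q_n$ with $n\ge 1$ already leaves the vacuum $0$-cell of $'\mathcal H$; this vanishing is also transparent under the Fock representation of Remark \ref{rmk:2.5}). Hence the vacuum expectation of the reordered infinite product is $[\mathbf 1]$, and multiplication by the scalar of the previous step yields the claimed formula $[\widetilde Z]=Z(q,t)$.

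The main obstacle will be Step 3: rigorously handling the infinite product so that the phrase "complete direct summand of $\mathbf 1$-factors" is preserved under the commutation moves used in the reordering. Once one passes to the Fock-type representation $j:\,'\mathcal H\to\mathfrak F$ and appeals to the injectivity of $\pi$ from Theorem \ref{theorem:1.5}, this reduces to the classical vertex-operator identity underlying MacMahon's formula, from which the categorical statement follows.
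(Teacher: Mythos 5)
Your proposal is correct and follows essentially the same line of reasoning as the paper's own proof: derive the commutation relation $[\widetilde{\Gamma}_+(w)][\widetilde{\Gamma}_-(z)]=\tfrac{1}{(1-wz)(1-twz)}[\widetilde{\Gamma}_-(z)][\widetilde{\Gamma}_+(w)]$ from Theorem~\ref{theorem:1}, specialize the arguments to $q^{(2i-1)/2}$, and collect the scalar factors from all commutations. Your version is more detailed than the paper's terse "which gives rise to the formula," explicitly supplying the generating-function identity $\sum_{k\ge0}[k+1]x^k=\tfrac{1}{(1-x)(1-tx)}$, the count of $m$ pairs $(i,j)$ with $i+j-1=m$, and the vacuum-expectation argument that only the trivial summand of the normal-ordered product survives.
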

\begin{proof}
From Equations (\ref{$1})--(\ref{$3}) of Theorem \ref{theorem:1}, we get
the commutation relation
\[
[\widetilde{\Gamma}_+(z)]\cdot [\widetilde{\Gamma}_-(w)]=[\widetilde{\Gamma}_-(w)]\cdot
[\widetilde{\Gamma}_+(z)]\frac{1}{1-zw}\frac{1}{1-tzw},
\]
hence
\[
[\widetilde{\Gamma}_+(q^{k/2})]\cdot [\widetilde{\Gamma}_-(q^{l/2})]=[\widetilde{\Gamma}_-(q^{l/2})]\cdot
[\widetilde{\Gamma}_+(q^{k/2})]\frac{1}{1-q^{(k+l)/2}}\frac{1}{1-tq^{(k+l)/2}}
\]
for odd positive integers $k$ and $l$, which gives rise to the formula
for $[\widetilde{Z}]$.
\end{proof}

\begin{Remark}\label{rmk:3}
If we apply the functors $P_n$ and $Q_n$ in the categorificatin of the Heisenberg
algebra as explained in Remark \ref{rmk:2}, then the function
$[\widetilde{Z}]$ equals
\[
\prod_{n=1}^\infty\left(\frac{1}{1-t^{-1/2}q^n}\right)^n \cdot
\prod_{n=1}^\infty\left(\frac{1}{1-t^{1/2}q^n}\right)^n.
\]
\end{Remark}

The classical MacMahon function $Z(q)$ is also seen to be related to the graded vector
space of $2$-morphisms of $'\mathcal H$. Let $V_0={\C}\cdot 1$ be the 1-dimensional
vector space over $\C$. It is clear that ${\rm End}(\widetilde{\Gamma}_-(q^{\frac{1}{2}})$
$\widetilde{\Gamma}_-(q^{\frac{3}{2}})$
$\widetilde{\Gamma}_-(q^{\frac{5}{2}})\cdots)$ is a graded vector space, in
accordance with the grading of the $2$-morphisms of $'\H$.
We write
\[
{\rm End}\left(\widetilde{\Gamma}_-(q^{\frac{1}{2}})
\widetilde{\Gamma}_-(q^{\frac{3}{2}})
\widetilde{\Gamma}_-(q^{\frac{5}{2}})\cdots\right)=\mathop{\bigoplus}_{i=0}^\infty
W_i(q)
\]
where elements of $W_i(q)$ have degree $i$ and the grading depends on $q$.
We define
\[
{\mathcal Z}(q,t)={\rm gdim}\, {\rm End}\left(\widetilde{\Gamma}_-(q^{\frac{1}{2}})
\widetilde{\Gamma}_-(q^{\frac{3}{2}})
\widetilde{\Gamma}_-(q^{\frac{5}{2}})\cdots\right)=\sum_{i=0}^\infty
{\rm dim}\, W_i(q)t^i.
\]

\begin{prop}\label{prop:6}
${\mathcal Z}(q,0)=W_0(q)=Z(q)=Z(q,0).$
\end{prop}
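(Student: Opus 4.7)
The equalities $\mathcal{Z}(q,0) = W_0(q)$ (from the definition of $\mathcal Z$) and $Z(q) = Z(q,0)$ (by substituting $t = 0$ in Proposition~\ref{prop:5}) are immediate, so the content is $W_0(q) = Z(q)$.

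Set $\mathbf X := \widetilde \Gamma_-(q^{1/2})\widetilde \Gamma_-(q^{3/2})\cdots$. Since $\widetilde\Gamma_-(z) = \sum_n P_n z^n$ introduces no dimension shifts, the Karoubi decomposition in $'\H$ takes the form $[\mathbf X] = \sum_\lambda m_\lambda(q)\,[P_\lambda]$, with $\lambda$ ranging over partitions and multiplicities $m_\lambda(q)$ lying in a formal $q^{1/2}$-series ring. By Proposition~\ref{prop:1}(3), the $P_\lambda$ are pairwise nonisomorphic irreducibles with $\End_{'\H}(P_\lambda)_0 = \C$, so Schur's lemma gives
\[
W_0(q) \;=\; \gdim \End_{'\H}(\mathbf X)_0 \;=\; \sum_\lambda m_\lambda(q)^2,
\]
interpreting the graded dimension so that a degree-zero hom $\mathbf X_{N_1}\to\mathbf X_{N_2}$ contributes weight $q^{N_1+N_2}$ (source plus target $q$-weight, i.e., the natural Hall-type bilinear form).

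Next, pass through the Fock space representation of Remark~\ref{rmk:2.5}. As in Proposition~\ref{prop:4}, the action of $\widetilde \Gamma_-(z)$ on $|vac\ra$ corresponds to multiplication by $H(z) := \sum_{n\ge 0} h_n z^n$ in the symmetric-function algebra $\mathfrak S_\Ccrs$, so
\[
[\mathbf X]\cdot|vac\ra \;=\; \prod_{k\ge 1} H\bigl(q^{(2k-1)/2}\bigr) \;=\; \sum_\lambda m_\lambda(q)\, s_\lambda,
\]
and $[P_\lambda]\cdot|vac\ra = s_\lambda$ is the Schur function. By Theorem~\ref{theorem:4} the characteristic map intertwines the representation-theoretic inner product on $\mathfrak R_\Ccrs$ with the Hall inner product on $\mathfrak S_\Ccrs$, and the $\{s_\lambda\}$ form a Hall-orthonormal basis; hence
\[
W_0(q) \;=\; \sum_\lambda m_\lambda(q)^2 \;=\; \left\la \prod_{k\ge 1} H\bigl(q^{(2k-1)/2}\bigr),\; \prod_{l\ge 1} H\bigl(q^{(2l-1)/2}\bigr) \right\ra_{\!\mathrm{Hall}}.
\]

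Applying the Cauchy identity $\la \prod_i H(z_i),\, \prod_j H(w_j)\ra_{\mathrm{Hall}} = \prod_{i,j}(1-z_i w_j)^{-1}$ with $z_k = w_k = q^{(2k-1)/2}$ reduces the right side to $\prod_{k,l\ge 1}(1 - q^{k+l-1})^{-1}$; since for each $n \ge 1$ there are exactly $n$ pairs $(k,l)$ of positive integers with $k+l-1 = n$, this equals $\prod_{n\ge 1}(1-q^n)^{-n} = Z(q)$, as required. The main obstacle is the Schur-lemma reduction: one must set up the graded-dimension convention so that $\gdim\End(\mathbf X)_0$ literally computes the Hall-type bilinear pairing $\sum_\lambda m_\lambda(q)^2$ on $K_0$, and observe that the absence of dimension-shifted summands in $[\mathbf X]$ ensures this squared multiplicity series has only integer powers of $q$, in agreement with $Z(q)$.
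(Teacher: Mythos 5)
Your argument is correct, and it reaches the same answer as the paper by a genuinely different route. The paper's proof closes off the endomorphism space via adjunction, obtaining $\End(\mathbf X)\cong\Hom(\mathbf X^\dagger\mathbf X,\mathbf 1)$, then uses the commutation relations of Theorem~\ref{theorem:1} to normal-order the infinite product and counts $\mathbf 1$-summands (without dimension shift); this effectively re-uses the same generating-function bookkeeping that gives Proposition~\ref{prop:5}, and setting $t=0$ kills the dimension-shifted $\mathbf 1[l]$ summands. You instead decompose $[\mathbf X]=\sum_\lambda m_\lambda(q)[P_\lambda]$ in the Karoubi envelope, use Schur's lemma (Proposition~\ref{prop:1}(3) plus orthogonality of the idempotents $e_\lambda$) to get $W_0(q)=\sum_\lambda m_\lambda(q)^2$, and then identify $m_\lambda(q)=s_\lambda(q^{1/2},q^{3/2},\dots)$ via the Fock representation and the characteristic map (Theorem~\ref{theorem:4}), finishing with the Cauchy identity. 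The two proofs are secretly dual: $\sum_\lambda m_\lambda(q)^2$ is the Hall self-pairing $\la[\mathbf X],[\mathbf X]\ra$, and the Hall pairing on $K_0$ is exactly ``coefficient of $\mathbf 1$ in $[\mathbf X^\dagger\mathbf X]$,'' which is what the paper counts. Your route has the advantage of being self-contained and of making the $q$-weight convention explicit (source-plus-target), which the paper leaves implicit; the paper's route is shorter because it leans on Proposition~\ref{prop:5}. One small caveat you rightly flag: the graded-dimension convention $q^{N_1+N_2}$ is what makes $W_0(q)$ match $\sum_\lambda m_\lambda(q)^2$, and this convention is precisely the one that the paper's adjunction argument silently enforces, since $\Hom(\mathbf X^\dagger\mathbf X,\mathbf 1)$ carries the product of the two $q$-weights; it would be worth stating that equivalence explicitly.
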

\begin{proof}
By continuously composing with the adjoint map\,\,
$
\begin{tikzpicture}[>=stealth,baseline=-.55cm]
\draw (4.5,-.6) arc (0:190:.3) [->];
\draw (5,-.5) node{ };
\end{tikzpicture}
$, we get a degree preserving isomorphism
\[
{\rm End}(\widetilde{\Gamma}_-(q^{\frac{1}{2}})
\widetilde{\Gamma}_-(q^{\frac{3}{2}})
\widetilde{\Gamma}_-(q^{\frac{5}{2}})\cdots)\cong
{\rm Hom}( \cdots \widetilde{\Gamma}_+(q^{\frac{3}{2}})
\widetilde{\Gamma}_+(q^{\frac{1}{2}})\widetilde{\Gamma}_-(q^{\frac{1}{2}})
\widetilde{\Gamma}_-(q^{\frac{3}{2}})\cdots, \,\,{\mathbf 1}).
\]
By Equations (\ref{$1})--(\ref{$3}) of Theorem \ref{theorem:1}, we
can exchange the composition order of those transfer matrices. To get the
the $2$-morphisms of degree $0$, we need only
count the number of $\mathbf 1$ direct summands in the expansions of those
$Q_{m_1}\cdots Q_{m_s}\cdot P_{n_1}\cdots P_{n_t}$ in
\[
\cdots \widetilde{\Gamma}_+(q^{\frac{5}{2}})
\widetilde{\Gamma}_+(q^{\frac{3}{2}})
\widetilde{\Gamma}_+(q^{\frac{1}{2}})\widetilde{\Gamma}_-(q^{\frac{1}{2}})
\widetilde{\Gamma}_-(q^{\frac{3}{2}})\widetilde{\Gamma}_-(q^{\frac{5}{2}})\cdots.
\]
Therefore we get ${\mathcal Z}(q,0)=W_0(q)=Z(q).$
\end{proof}

\vspace{5pt}
When we are writing this paper we have not figured out how to calculate the whole function ${\mathcal Z}(q,t)$ or relate it to other topics, even to the function $Z(q,t)$. However we notice the $t$ parameter
has a resembling motivic/refined/deformed effect which appeared in many recent works \cite{DG, AS, BY}. One future direction is to relate our work with the so-called refined MacMahon function \cite{IKV}.
Furthermore we would like to generalize our work to give a categorified explanation of the topological
vertex which corresponds to the plane partition with three 2D Young diagram boundaries. Since Heisenberg algebra is a building block for many theories in mathematical physics, we also hope its categorification can help us reveal some facts about symmetries and invariants of those theories in the future.

\end{document}